\documentclass{llncs}
\usepackage{amsmath,amssymb,tikz}
\usetikzlibrary{snakes}
\usepackage{multicol}
\usepackage{array}
\usepackage{wrapfig}
\newtheorem{defi}{Definition}
\newtheorem{lem}[defi]{Lemma}
\newtheorem{thm}[defi]{Theorem}
\newtheorem{ex}[defi]{Example}

\def\id{\operatorname{id}}
\def\<#1>{\langle#1\rangle}
\let\set\mathbb
\def\sem{_{\mathrm{sem}}}%
\def\syn{_{\mathrm{syn}}}%
\def\BF{\operatorname{BF}} 
\def\strat{\mathbb{S}} 
\def\clap#1{\hbox to0pt{\hss#1\hss}}

\author{Manuel Kauers\inst{1} \and Martina Seidl\inst{2}}

\institute{Institute for Algebra, JKU Linz, Austria\\
\email{manuel.kauers@jku.at}
 \and
Institute for Formal Models and Verification, JKU Linz, Austria\\
\email{martina.seidl@jku.at}
}
\title{Symmetries of Quantified Boolean Formulas\thanks{Parts of this work
were supported by the Austrian Science Fund (FWF) under grant numbers
NFN S11408-N23 (RiSE),
Y464-N18, and SFB F5004.}}
\pagestyle{plain}

\begin{document}
\maketitle

\begin{abstract}
  While symmetries are well understood for Boolean formulas
  and successfully exploited in practical SAT solving,
  less is known about symmetries in quantified Boolean formulas (QBF).
  There are some works introducing adaptions of 
  propositional symmetry breaking techniques, 
  with a theory covering only very specific parts of QBF symmetries. 
  We present a general framework that gives
  a concise characterization of symmetries of QBF. 
  Our framework naturally incorporates the duality of 
  universal and existential symmetries resulting in a general 
  basis for QBF symmetry breaking.
\end{abstract}

\section{Introduction}\label{sec:intro}

Mathematicians are generally advised~\cite{polya1945} not to destroy symmetry in a given problem but instead to exploit it.
In automated reasoning, we generally exploit symmetries by destroying them. In this context, to destroy a symmetry
means to enrich the given problem by additional constraints which tell the solver that certain parts of the search
space are equivalent, so that it investigates only one of them.
Such symmetry breaking techniques have been studied since long.
They are particularly well developed in SAT~\cite{DBLP:series/faia/Sakallah09} and CSP~\cite{DBLP:reference/fai/GentPP06}.
In CSP~\cite{DBLP:conf/aaai/CohenJJPS06} it has been observed that it is appropriate to distinguish two kinds of symmetries: those of the
problem itself and those of the solution set.
In the present paper, we apply this idea to Quantified Boolean Formulas (QBF).

\begin{wraptable}{r}{5cm}
\vspace{-.75cm}
\scriptsize
\centering

\begin{tabular}{|r|>{\raggedleft\arraybackslash}p{0.9cm}|%
>{\raggedleft\arraybackslash}p{0.9cm}|%
>{\raggedleft\arraybackslash}p{0.9cm}|%
>{\raggedleft\arraybackslash}p{0.9cm}|}
\multicolumn{1}{l}{} &
\multicolumn{4}{c}{Solving times (in sec)} \\
\multicolumn{1}{l}{} &
\multicolumn{2}{c}{w/o SB} &
\multicolumn{2}{c}{with SB}\\
\hline
$n$ &  QRes & LD & QRes & LD \\
\hline
$10$ & $0.3$ & $0.5$ & $0.4$ & $0.4$\\
$20$ & $160$ & $0.5$ & $0.4$ & $0.4$\\
$40$ & $>3600$ & $0.5$ & $0.4$ & $0.4$\\
$80$ & $>3600$ & $0.7$ & $0.4$ & $0.4$\\
$160$ &  $>3600$ & $2.2$ & $0.5$ & $0.4$\\
$320$ & $>3600$  & $12.3$ & $0.6$ & $0.5$\\
$640$ & $>3600$ & $36.8$ & $1.0$ & $0.8$\\
$1280$ & $>3600$ & $241.1$ & $22.6$ & $19.7$\\
$2560$ & $>3600$ & $>3600$ & $215.7$ & $155.2$\\
$5120$ & $>3600$ & $>3600$ & $1873.2$ & $1042.6$\\
\hline

\end{tabular}

\vspace{-.75cm}
\end{wraptable}

Symmetry breaking for QBF has already been studied more than ten years
ago~\cite{DBLP:conf/sat/AudemardMS04,DBLP:conf/ijcai/AudemardJS07,audemard2007efficient},
and it can have a dramatic effect on the performance of QBF solvers. As an
extreme example, the instances of the KBKF benchmark
set~\cite{DBLP:journals/iandc/BuningKF95} are highly symmetric. For some problem
sizes~$n$, we applied the two configurations QRes (standard Q-resolution) and LD
(long-distance resolution) of the solver DepQBF~\cite{DBLP:conf/cade/LonsingE17}
to this benchmark set.  For LD it is known that it performs exponentially better
than QRes on the KBKF formulas~\cite{DBLP:conf/lpar/EglyLW13}. The table above
shows the runtimes of DepQBF without and with symmetry breaking
(SB). While QRes-DepQBF only solves two formulas without symmetry breaking, with
symmetry breaking it even outperforms LD-DepQBF. Also for the LD configuration,
the symmetry breaking formulas are beneficial.  While this is an extreme
example, symmetries appear not only in crafted formulas.  In fact, we found that
about 60\% of the benchmarks used in the recent edition of
QBFEval\footnote{\url{http://www.qbflib.org/qbfeval17}} have nontrivial
symmetries that could be exploited.

Our goal in this paper is to develop an explicit, uniform, and general theory for symmetries of QBFs.
The theory is developed from scratch, and we include detailed proofs of all theorems.
The pioneering work on QBF symmetries~\cite{DBLP:conf/sat/AudemardMS04,DBLP:conf/ijcai/AudemardJS07,audemard2007efficient}
largely consisted in translating the well-known
techniques from SAT to QBF. This is not trivial, as universal quantifiers require special treatment.
Since then, however, research on QBF symmetry breaking almost stagnated. We believe that more work is
necessary. For example, we have observed that universal symmetry
breakers concerning universal variables fail to work correctly in recent 
clause-and-cube-learning QBF solvers when compactly provided as cubes. 
Although the encoding of the symmetry breaker is provably correct in theory, 
it turns out to be incompatible with
pruning techniques like pure literal elimination
for which already the compatibility with learning is not obvious~\cite{pure04}. Problems occur, for example, in the KBKF formulas
mentioned above. (Of course, for the reported timings we have only used parts of the symmetry breaking formula
which are provably correct both in theory and in practice.)

We hope that the theory developed in this paper
will help to resuscitate the interest in symmetries for QBF, 
lead to a better understanding of the interplay between symmetry breaking and modern optimization techniques,
provide a starting point for translating recent progress made in SAT and CSP to the QBF world, and
produce special symmetry breaking formulas that better exploit the unique features of~QBF.

\section{Quantified Boolean Formulas}\label{sec:qbf}

Let $X=\{x_1,\dots,x_n\}$ be a finite set of propositional variables
and $\BF(X)$ be a set of \emph{Boolean formulas} over $X$.
The elements of $\BF(X)$ are well-formed objects built from the
variables of $X$, truth constants $\top$ (true) and $\bot$ (false),
as well as logical connectives according to a certain grammar.
For most of the paper, we will not need to be very specific about the
structure of the elements of~$\BF(X)$.
We assume a well-defined semantics for the logical connectives, i.e.,
for every $\phi\in\BF(X)$ and every assignment $\sigma\colon X\to\{\top,\bot\}$
there is a designated value $[\phi]_\sigma\in\{\top,\bot\}$ associated to $\phi$ and~$\sigma$. In particular, we use $\land$ (conjunction),
$\lor$ (disjunction), $\leftrightarrow$ (equivalence),
$\rightarrow$ (implication), $\oplus$ (xor), and $\neg$ (negation) with
their standard semantics for combining and negating formulas.
Two formulas $\phi, \psi \in \BF(X)$ are \emph{equivalent} if
for every assignment $\sigma\colon X\to\{\top,\bot\}$ we have
$[\phi]_\sigma = [\psi]_\sigma$.
We use lowercase Greek letters for Boolean formulas and assignments.

If $f\colon\BF(X)\to\BF(X)$ is a function and $\sigma\colon X\to\{\top,\bot\}$ is
an assignment, the assignment $f(\sigma)\colon X\to\{\top,\bot\}$ is defined through
$f(\sigma)(x)=[f(x)]_\sigma$ ($x\in X$).
A partial assignment is a function $\sigma\colon Y\to\{\top,\bot\}$ with~$Y\subseteq X$.
If $\sigma$ is such a partial assignment and $\phi\in\BF(X)$, then $[\phi]_\sigma$ is supposed to
be an element of $\BF(X\setminus Y)$ such that for every assignment $\tau\colon X\to\{\top,\bot\}$ with $\tau|_Y=\sigma$ we
have $[[\phi]_\sigma]_\tau=[\phi]_\tau$. For example, imagine that the formula $[\phi]_\sigma$ is obtained
from $\phi$ by replacing every variable $y\in Y$ by the truth value~$\sigma(y)$.

We use uppercase Greek letters to denote \emph{quantified Boolean formulas} (QBFs).
A QBF has the form
$\Phi=P.\phi$ where $\phi\in\BF(X)$ is a Boolean formula and $P$ is a quantifier prefix for~$X$,
i.e., $P=Q_1x_1Q_2x_2\dots Q_nx_n$ for $Q_1,\dots,Q_n\in\{\forall,\exists\}$.
We only consider closed formulas, i.e., each element of $X$ appears in the prefix.
For a fixed prefix $P=Q_1x_1Q_2x_2\dots Q_nx_n$,
the \emph{quantifier block} of the variable $x_i$ is defined by
the smallest $i_{\min}\in\{1,\dots,i\}$ and the largest $i_{\max}\in\{i,\dots,n\}$ such that $Q_{i_{\min}}=\cdots=Q_{i_{\max}}$.

Every QBF is either true or false. The truth value is defined recursively as follows:
$\forall xP.\phi$ is true iff both $P.[\phi]_{\{x=\top\}}$ and $P.[\phi]_{\{x=\bot\}}$ are true,
and $\exists xP.\phi$ is true iff $P.[\phi]_{\{x=\top\}}$ or $P.[\phi]_{\{x=\bot\}}$ is true.
For example, $\forall x_1\exists x_2.(x_1 \leftrightarrow x_2)$ is true and $\exists x_1\forall x_2.(x_1\leftrightarrow x_2)$ is false.
The semantics of a QBF $P.\phi$ can also be described as a game for two players~\cite{DBLP:books/daglib/0072413}:
In the $i$th move, the truth value of $x_i$ is chosen by the existential player if $Q_i=\exists$ and by
the universal player if $Q_i=\forall$. The existential player wins if the resulting formula is true and the
universal player wins if the resulting formula is false. In this interpretation, a QBF is true if there is
a winning strategy for the existential player and it is false if there is a winning strategy for the universal player.

Strategies can be described as trees. Let
$P=Q_1x_1Q_2x_2\dots Q_nx_n$ be a prefix. An \emph{existential strategy} for $P$ is a tree of
height $n+1$ where every node at level $k\in\{1,\dots,n\}$ has one child if
$Q_k=\exists$ and two children if $Q_k=\forall$. In the case $Q_k=\forall$, the
two edges to the children are labeled by $\top$ and $\bot$, respectively. In
the case $Q_k=\exists$, the edge to the only child is labeled by either $\top$
or~$\bot$. \emph{Universal strategies} are defined analogously, the only difference being that the
roles of the quantifiers are exchanged, i.e., nodes at level $k$ have two
successors if $Q_k=\exists$ (one labeled $\bot$ and one labeled $\top$) and one
successor if $Q_k=\forall$ (labeled either $\bot$ or $\top$).
Here are the four existential strategies and the two universal strategies
for the prefix $\forall x_1\exists x_2$:
\begin{center}
  \begin{tikzpicture}[scale=.5]

    \begin{scope}[circle,inner sep=1.5pt]
    \node[draw] (root) at (0,0) {};
    \node[draw] (0) at (-1,-1) {};
    \node[draw] (1) at (1,-1) {};
    \node[draw] (01) at (-1,-2) {};
    \node[draw] (11) at (1,-2) {};
    \end{scope}

    \draw
    (root)edge node[left]{\tiny$\bot$} (0)
    (root)edge node[right]{\tiny$\top$} (1)
    (0)edge node[left]{\tiny$\top$} (01)
    (1)edge node[right]{\tiny$\top$} (11);
  \end{tikzpicture}\hfil
  \begin{tikzpicture}[scale=.5]

    \begin{scope}[circle,inner sep=1.5pt]
    \node[draw] (root) at (0,0) {};
    \node[draw] (0) at (-1,-1) {};
    \node[draw] (1) at (1,-1) {};
    \node[draw] (01) at (-1,-2) {};
    \node[draw] (11) at (1,-2) {};
    \end{scope}

    \draw
    (root)edge node[left]{\tiny$\bot$} (0)
    (root)edge node[right]{\tiny$\top$} (1)
    (0)edge node[left]{\tiny$\bot$} (01)
    (1)edge node[right]{\tiny$\top$} (11);
  \end{tikzpicture}\hfil
  \begin{tikzpicture}[scale=.5]

    \begin{scope}[circle,inner sep=1.5pt]
    \node[draw] (root) at (0,0) {};
    \node[draw] (0) at (-1,-1) {};
    \node[draw] (1) at (1,-1) {};
    \node[draw] (01) at (-1,-2) {};
    \node[draw] (11) at (1,-2) {};
    \end{scope}

    \draw
    (root)edge node[left]{\tiny$\bot$} (0)
    (root)edge node[right]{\tiny$\top$} (1)
    (0)edge node[left]{\tiny$\top$} (01)
    (1)edge node[right]{\tiny$\bot$} (11);
  \end{tikzpicture}\hfil
  \begin{tikzpicture}[scale=.5]

    \begin{scope}[circle,inner sep=1.5pt]
    \node[draw] (root) at (0,0) {};
    \node[draw] (0) at (-1,-1) {};
    \node[draw] (1) at (1,-1) {};
    \node[draw] (01) at (-1,-2) {};
    \node[draw] (11) at (1,-2) {};
    \end{scope}

    \draw
    (root)edge node[left]{\tiny$\bot$} (0)
    (root)edge node[right]{\tiny$\top$} (1)
    (0)edge node[left]{\tiny$\bot$} (01)
    (1)edge node[right]{\tiny$\bot$} (11);
  \end{tikzpicture}\hfil
  \begin{tikzpicture}[scale=.5]

    \begin{scope}[circle,inner sep=1.5pt]
    \node[draw] (root) at (0,0) {};
    \node[draw] (0) at (0,-1) {};
    \node[draw] (01) at (-1,-2) {};
    \node[draw] (11) at (1,-2) {};
    \end{scope}

    \draw
    (root)edge node[left]{\tiny$\top$} (0)
    (0)edge node[left]{\tiny$\top$} (01)
    (0)edge node[right]{\tiny$\bot$} (11);
  \end{tikzpicture}\hfil
  \begin{tikzpicture}[scale=.5]

    \begin{scope}[circle,inner sep=1.5pt]
    \node[draw] (root) at (0,0) {};
    \node[draw] (0) at (0,-1) {};
    \node[draw] (01) at (-1,-2) {};
    \node[draw] (11) at (1,-2) {};
    \end{scope}

    \draw
    (root)edge node[left]{\tiny$\bot$} (0)
    (0)edge node[left]{\tiny$\top$} (01)
    (0)edge node[right]{\tiny$\bot$} (11);
  \end{tikzpicture}
\end{center}
We write $\strat_\exists(P)$ for the set of all existential strategies and
$\strat_\forall(P)$ for the set of all universal strategies.
As shown in the following lemma, existential and universal strategies for the
same prefix a share at least one common path.
Unless stated otherwise, a path is meant to be complete in the sense that it
starts at the root and ends at a leaf. 
\begin{lem}
  If $P$ is a prefix and $s\in\strat_\exists(P)$, $t\in\strat_\forall(P)$,
  then $s$ and $t$ have a path in common.
\label{lem:mod}
\end{lem}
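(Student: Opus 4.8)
The plan is to identify each complete root-to-leaf path with the sequence of edge labels it traverses, i.e.\ with an assignment $\sigma\colon X\to\{\top,\bot\}$ obtained by reading the label of the edge leaving level~$k$ as the value $\sigma(x_k)$. Under this reading, the statement that $s$ and $t$ ``have a path in common'' means that there is a single assignment $\sigma$ that is simultaneously realized as a path in $s$ and as a path in $t$. I would construct such a $\sigma$ by descending through both trees in lockstep, one level at a time.

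First I would set up an induction on the level $k=0,1,\dots,n$ with the invariant that the partial assignment $\sigma|_{\{x_1,\dots,x_k\}}$ determines a node $a_k$ at level $k+1$ in $s$ and a node $b_k$ at level $k+1$ in $t$, each reached from the respective root by following the edges labeled $\sigma(x_1),\dots,\sigma(x_k)$. The base case $k=0$ is immediate, taking $a_0$ and $b_0$ to be the two roots.

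The inductive step is the heart of the argument, and it is precisely where the dual definitions of the two kinds of strategy are used. Given $a_k$ and $b_k$ at level $k+1$, I would split on the value of $Q_{k+1}$. If $Q_{k+1}=\exists$, then in $s$ the node $a_k$ has exactly one outgoing edge, carrying some label $v$, whereas in $t$ the node $b_k$ has two outgoing edges, one labeled $\top$ and one labeled $\bot$; I set $\sigma(x_{k+1})=v$, follow the unique edge in $s$ and the (necessarily present) edge labeled $v$ in $t$, and thereby obtain $a_{k+1}$ and $b_{k+1}$. If $Q_{k+1}=\forall$, the situation is mirror-symmetric: now $b_k$ has a unique outgoing edge with some label $w$, while $a_k$ branches on both values, and I set $\sigma(x_{k+1})=w$. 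In either case exactly one of the two trees forces the value of $x_{k+1}$ and the other offers both values, so no conflict can arise and the invariant is preserved.

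After $n$ steps the invariant yields leaves $a_n$ and $b_n$, so $\sigma$ traces a complete root-to-leaf path in both $s$ and $t$, which is the desired common path. I expect the only genuine obstacle to be conceptual rather than technical: one must fix a precise notion of ``common path'' for two trees with distinct node sets, after which the lockstep descent and the verification of the invariant are routine. The key observation making the construction go through is that the forcing alternates cleanly between the two trees and never occurs in both at once, because $Q_{k+1}$ is either $\exists$ or $\forall$ but not both.
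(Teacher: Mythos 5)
Your proof is correct and takes essentially the same route as the paper's: both build the common path by induction along the prefix, relying on the same key observation that at each level exactly one of the two trees forces the edge label while the other offers both continuations, so the choice can never conflict. The only cosmetic difference is that the paper phrases this as induction on the prefix length (truncating both trees at the leaves and extending the inherited common path by one edge), whereas you descend from the roots in lockstep maintaining an invariant level by level.
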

\begin{proof}
  A common path can be constructed by induction on the length of the prefix. There is nothing to show for
  prefixes of length~$0$. Suppose the claim holds for all prefixes of length~$n$ and consider a prefix
  $P'=P\,Q_{n+1}x_{n+1}$ of length $n+1$. Let $s\in\strat_\exists(P')$, $t\in\strat_\forall(P')$ be arbitrary.
  By chopping off the leafs of $s$ and~$t$, we obtain elements of $\strat_\exists(P)$ and $\strat_\forall(P')$,
  respectively, and these share a common path $\sigma_0$ by induction hypothesis.
  If $Q_{n+1}=\exists$, then $\sigma_0$ has a unique continuation in~$s$, with an edge labeled either $\top$ or~$\bot$,
  and $\sigma_0$ has two continuations in~$t$, one labeled $\top$ and one labeled~$\bot$, so the continuation of
  $\sigma_0$ in $s$ must also appear in~$t$.
  If $Q_{n+1}=\forall$, the argumentation is analogous.
  \qed
\end{proof}
Every path in a strategy for a prefix $P$ corresponds to an
assignment $\sigma\colon X\to\{\top,\bot\}$.
An existential strategy for QBF $P.\phi$ is a \emph{winning strategy} (for the existential player)
if all its paths are assignments for which $\phi$ is true.
A universal strategy is a \emph{winning strategy} (for the universal player)
if all its paths are assignments for which $\phi$ is false.
For a QBF $P.\phi$ and an existential strategy $s\in\strat_\exists(P)$, we
define $[P.\phi]_s=\bigwedge_\sigma[\phi]_\sigma$, where $\sigma$ ranges over
all the assignments corresponding to a path of~$s$.
Then we have $[P.\phi]_s=\top$ if and only if $s$ is an existential winning strategy.
For a universal strategy $t\in\strat_\forall(P)$, we define
$[P.\phi]_t=\bigvee_\tau[\phi]_\tau$, where $\tau$ ranges over all the assignments
corresponding to a path of~$t$.
Then $[P.\phi]_s=\bot$ if and only if $t$ is a universal winning strategy.

The definitions made in the previous paragraph are
consistent with the interpretation of QBFs introduced earlier: a QBF is true if and only if
there is an existential winning strategy, and it is false if and only if there is a universal winning strategy.
Lemma~\ref{lem:mod} ensures that a QBF is either true or false.
As another consequence of Lemma~\ref{lem:mod}, observe that for every QBF $P.\phi$ we have
\begin{alignat*}1
&\exists\ s\in\strat_\exists(P) : [P.\phi]_s=\top
\iff
\forall\ t\in\strat_\forall(P) : [P.\phi]_t=\top\\
\text{and}\quad&
\forall\ s\in\strat_\exists(P) : [P.\phi]_s=\bot
\iff
\exists\ t\in\strat_\forall(P) : [P.\phi]_t=\bot.
\end{alignat*}
We will also need the following property, the proof of which is straightforward.
\begin{lem}\label{lem:thread}
  Let $P$ be a prefix for~$X$, and let $\phi,\psi\in\BF(X)$.
  Then for all $s\in\strat_\exists(P)$ we have $[P.(\phi\land\psi)]_s=[P.\phi]_s\land[P.\psi]_s$,
  and for all $t\in\strat_\forall(P)$ we have $[P.(\phi\lor\psi)]_t=[P.\phi]_t\lor[P.\psi]_t$.
\end{lem}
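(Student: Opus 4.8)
The plan is to unfold both equalities directly from the definitions of $[P.\phi]_s$ and $[P.\phi]_t$ and then reduce everything to the associativity and commutativity of the Boolean connectives. The single observation that makes this work is that a strategy $s\in\strat_\exists(P)$ (respectively $t\in\strat_\forall(P)$) depends only on the prefix $P$ and not on the Boolean matrix; hence the set of assignments $\sigma$ (respectively $\tau$) over which the big conjunction (respectively disjunction) ranges is exactly the same whether we insert $\phi$, $\psi$, or $\phi\land\psi$ (respectively $\phi\lor\psi$). This shared index set is what lets us pull the two formulas apart.

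For the existential case, I would first write $[P.(\phi\land\psi)]_s=\bigwedge_\sigma[\phi\land\psi]_\sigma$, where $\sigma$ ranges over the assignments corresponding to the paths of~$s$. By the standard semantics of conjunction, $[\phi\land\psi]_\sigma=[\phi]_\sigma\land[\psi]_\sigma$ for each fixed~$\sigma$. Substituting yields $\bigwedge_\sigma([\phi]_\sigma\land[\psi]_\sigma)$, and since the index set is the same for both members, commutativity and associativity of $\land$ (equivalently, regrouping a meet of meets in the two-element Boolean lattice) give $\bigl(\bigwedge_\sigma[\phi]_\sigma\bigr)\land\bigl(\bigwedge_\sigma[\psi]_\sigma\bigr)$. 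The two factors are by definition $[P.\phi]_s$ and $[P.\psi]_s$, which is the claimed identity.

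The universal case is completely dual: replace $\land$ by~$\lor$, the conjunction over paths by the disjunction over paths, and $s$ by~$t$. One writes $[P.(\phi\lor\psi)]_t=\bigvee_\tau[\phi\lor\psi]_\tau=\bigvee_\tau([\phi]_\tau\lor[\psi]_\tau)$, regroups via commutativity and associativity of $\lor$ into $\bigl(\bigvee_\tau[\phi]_\tau\bigr)\lor\bigl(\bigvee_\tau[\psi]_\tau\bigr)$, and recognizes the result as $[P.\phi]_t\lor[P.\psi]_t$.

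There is no genuine obstacle here; the statement is essentially a bookkeeping identity, as the paper itself anticipates by calling the proof straightforward. The only point that deserves to be stated explicitly is the independence of the path set from the chosen matrix, since without it the regrouping step would be meaningless: one cannot factor a meet (or join) over two families indexed differently. Everything else is just the lattice arithmetic of the two-element Boolean algebra.
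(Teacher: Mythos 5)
Your proof is correct. The paper itself gives no proof of this lemma, dismissing it as straightforward; your argument---unfold $[P.\cdot]_s$ and $[P.\cdot]_t$ into the conjunction/disjunction over paths, apply the pointwise semantics of $\land$ and $\lor$ at each assignment, and regroup using the fact that the path set depends only on the prefix and not on the matrix---is exactly the intended straightforward argument, with the one genuinely necessary observation (the shared index set) made explicit.
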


\section{Groups and Group Actions}\label{sec:groups}

Symmetries can be described using groups and group actions~\cite{artin2011algebra}. Recall that a group is a set $G$ together with an
associative binary operation $G\times G\to G$, $(g,h)\mapsto gh$.
A group has a neutral element and every element
has an inverse in~$G$. A typical example for a group is the set $\set Z$ of integers together with addition. Another
example is the group of permutations. For any fixed $n\in\set N$, a permutation is a bijective
function $\pi\colon\{1,\dots,n\}\to\{1,\dots,n\}$. The set of all such functions together with composition forms a
group, called the symmetric group and denoted by~$S_n$.

A (nonempty) subset $H$ of a group~$G$ is called a subgroup of~$G$ if it is closed under the group operation and taking
inverses. For example, the set $2\set Z$ of all even integers is a subgroup of~$\set Z$, and the set
$\{\id,\binom{1\ 2\ 3}{1\ 3\ 2}\}$ is a subgroup of~$S_3$.
In general, a subset $E$ of $G$ is not a subgroup. However, for every subset $E$ we can consider the intersection
of all subgroups of $G$ containing~$E$. This is a subgroup and it is denoted by $\<E>$. The elements of $E$
are called \emph{generators} of the subgroup.
For example, we have $2\set Z=\<2>$, but also $2\set Z=\<4,6>$.
A set of generators for $S_3$ is $\{\binom{1\ 2\ 3}{2\ 3\ 1}, \binom{1\ 2\ 3}{2\ 1\ 3}\}$.

If $G$ is a group and $S$ is a set, then a \emph{group action} is a map $G\times S\to S$, $(g,x)\mapsto g(x)$
which is compatible with the group operation in the sense that for all $g,h\in G$ and $x\in S$ we have $(gh)(x)=g(h(x))$
and $e(x)=x$, where $e$ is the neutral element of~$G$. Note that when we have a group action, every element of $G$ can
be viewed as a bijective function~$S\to S$.

For example, for $G=S_n$ and $S=\{1,\dots,n\}$ we have a group action by the definition of the elements of~$S_n$.
Alternatively, we can let $S_n$ act on a set of tuples of length~$n$, say on $S=\{\Box,\blacksquare,\triangle\}^n$,
via permutation of the indices, i.e., $\pi(x_1,\dots,x_n) = (x_{\pi(1)},\dots,x_{\pi(n)})$. For example, for
$g=\binom{1\ 2\ 3}{1\ 3\ 2}$ we would have $g(\Box,\blacksquare,\Box)=(\Box,\Box,\blacksquare)$,
$g(\triangle,\triangle,\Box)=g(\triangle,\Box,\triangle)$,
$g(\blacksquare,\triangle,\triangle)=(\blacksquare,\triangle,\triangle)$, etc.
As one more example, we can consider the group $G=S_n\times S_m$ consisting of all pairs of permutations. The operation
for this group is defined componentwise, i.e., $(\pi,\sigma)(\pi',\sigma')=(\pi\pi',\sigma\sigma')$.
We can let $G$ act on a set of two dimensional arrays with shape $n\times m$, say on
$S=\{\Box,\blacksquare,\triangle\}^{n\times m}$, by letting the first component of a group element permute the row
index and the second component permute the column index.     For example, for $g=(\binom{1\ 2\ 3}{1\ 3\ 2}, \binom{1\ 2\ 3}{2\ 3\ 1})$ we then have
    \[
    g(\ \begin{tabular}{|c|c|c|}\hline
       $\Box$&$\blacksquare$&$\triangle$\\\hline
       $\Box$&$\triangle$&$\Box$\\\hline
       $\blacksquare$&$\blacksquare$&$\Box$\\\hline
    \end{tabular}\ ) =
    \begin{tabular}{|c|c|c|}\hline
       $\blacksquare$&$\triangle$&$\Box$\\\hline
       $\blacksquare$&$\Box$&$\blacksquare$\\\hline
       $\triangle$&$\Box$&$\Box$\\\hline
    \end{tabular}.
    \]
If we have a group action $G\times S\to S$, we can define an equivalence relation on $S$ via $x\sim y\iff\exists\ g\in G: x=g(y)$.
The axioms of groups and group actions ensure that $\sim$ is indeed an equivalence relation.
The equivalence classes are called the \emph{orbits} of the group action.
For example, for the action of $S_3$ on $\{\Box,\blacksquare,\triangle\}^3$ discussed above, there are
some orbits of size~1 (e.g., $\{(\blacksquare,\blacksquare,\blacksquare)\}$),
some orbits of size~3 (e.g., $\{(\Box,\Box,\triangle),(\Box,\triangle,\Box),(\triangle,\Box,\Box)\}$),
and there is one orbit of size~6 ($\{
(\Box,\blacksquare,\triangle),\penalty0
(\Box,\triangle,\blacksquare),\penalty0
(\blacksquare,\triangle,\Box),\penalty0
(\blacksquare,\Box,\triangle),\penalty0
(\triangle,\blacksquare,\Box),\penalty0
(\triangle,\Box,\blacksquare)\}$).

\section{Syntactic Symmetries}

We use group actions to describe symmetries of QBFs.
Two kinds of group actions are of interest.
On the one hand, we consider transformations that map formulas to formulas, i.e., a group action $G\times\BF(X)\to\BF(X)$.
On the other hand, we consider transformations that map strategies to strategies, i.e., a group action
$G\times\strat_\exists(P)\to\strat_\exists(P)$ or $G\times\strat_\forall(P)\to\strat_\forall(P)$.
In both cases, we consider groups $G$ which preserve the set of
winning strategies for a given QBF $P.\phi$.

Let us first consider group actions $G\times\BF(X)\to\BF(X)$.
In this case, we need to impose a technical restriction introduced in the following definition.
\begin{defi}\label{def:adm}
  Let $P$ be a prefix for~$X$.
  A bijective function $f\colon\BF(X)\to\BF(X)$ is called \emph{admissible} (w.r.t.~$P$) if
  \begin{enumerate}
  \item for every assignment $\sigma\colon X\to\{\top,\bot\}$ and every
    formula $\phi\in\BF(X)$ we have $[\phi]_{f(\sigma)} = [f(\phi)]_\sigma$;
  \item for every variable $x \in X$ the formula $f(x)$ only contains variables
    that belong to the same quantifier block of $P$ as~$x$.
  \end{enumerate}
\end{defi}
The first condition ensures that an admissible function $f$ preserves
propositional satisfiability. In particular, it implies that
for any $\phi, \psi \in \BF(X)$, the formulas $f(\neg\phi)$ and $\neg f(\phi)$ are equivalent, as are
$f(\phi \circ \psi)$ and $f(\phi) \circ f(\psi)$ for every binary connective~$\circ$.
As a consequence, it follows that the inverse of an admissible function is again admissible.
It also follows that an admissible function $f$ is essentially determined by its values for the variables.
Note however that variables can be mapped to arbitrary formulas.
\begin{ex}
  Let $X=\{x,y,a,b\}$ and $P=\forall x \forall y\exists a\exists b$.
  There is an admissible function $f$ with
  $f(x) = \neg x, f(y) = y, f(a) = b, f(b) = a$.
  For such a function, we may have $f(x\lor(a\to y))=\neg x\lor(b\to y)$.
  A function $g$ with $g(x)=a$ cannot be admissible,
  because of the second condition.
  By the first condition, a function $h$ with $h(x)=x$ and
  $h(y)=\neg x$ cannot be admissible.
\end{ex}
Next we show that admissible functions not only preserve satisfiability of
Boolean formulas, but also the truth of QBFs.
\begin{thm}\label{thm:admissible}
Let $P$ be a prefix for $X$ and $f\colon\BF(X)\to\BF(X)$ be
admissible for~$P$.
For any $\phi\in\BF(X)$ the formula $P.\phi$ is true if and only
if $P.f(\phi)$ is true.
\end{thm}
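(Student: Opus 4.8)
The plan is to prove the statement by transporting existential winning strategies along $f$. Recall from the discussion preceding the theorem that $P.\phi$ is true if and only if there exists $s\in\strat_\exists(P)$ with $[P.\phi]_s=\top$, and that $[P.\phi]_s=\bigwedge_\sigma[\phi]_\sigma$, where $\sigma$ ranges over the assignments on the paths of $s$. My goal is therefore to construct a bijection $\hat f\colon\strat_\exists(P)\to\strat_\exists(P)$ satisfying $[P.f(\phi)]_s=[P.\phi]_{\hat f(s)}$ for every $s$ and every $\phi$. Once this is available the equivalence follows at once: $P.f(\phi)$ is true iff some $s$ gives $[P.f(\phi)]_s=\top$, iff some $s$ gives $[P.\phi]_{\hat f(s)}=\top$, iff (as $\hat f$ is onto) some strategy $s'$ gives $[P.\phi]_{s'}=\top$, iff $P.\phi$ is true.

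The technical heart is the observation that admissibility forces $f$ to act blockwise on assignments. First I would record that $\sigma\mapsto f(\sigma)$ is a bijection on the set of assignments, with inverse $\tau\mapsto f^{-1}(\tau)$: applying condition~1 to $f^{-1}(x)$ gives $f^{-1}(f(\sigma))(x)=[f^{-1}(x)]_{f(\sigma)}=[f(f^{-1}(x))]_\sigma=\sigma(x)$, where I use that $f^{-1}$ is again admissible. Writing $X=B_1\cup\cdots\cup B_m$ for the partition into quantifier blocks, condition~2 guarantees that for $x\in B_j$ the formula $f(x)$ involves only variables of $B_j$, so $f(\sigma)(x)=[f(x)]_\sigma$ depends only on the restriction $\sigma|_{B_j}$. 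Hence $f$ restricts to a bijection $f_j$ on the assignments of each block $B_j$, with $f(\sigma)|_{B_j}=f_j(\sigma|_{B_j})$.

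With this blockwise description in hand, I would define $\hat f(s)$ to be the strategy whose set of path-assignments is $\{f(\sigma):\sigma\text{ a path of }s\}$, using that an existential strategy is determined by, and recoverable from, its set of paths viewed as the prefix tree of those assignments. The main obstacle is to verify that this set is indeed the path set of an existential strategy. For this I would invoke the structural characterization of existential strategies: the values on the universal blocks range freely over all combinations, while the value on each existential block $B_j$ is a fixed function of the values of the universal blocks preceding $B_j$. Because $f$ acts blockwise through the bijections $f_j$, applying $f$ permutes the free universal-block values bijectively and turns the defining function of each existential block into another such function (pre- and post-composed with the relevant $f_{j'}^{-1}$ and $f_j$). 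Thus $\{f(\sigma)\}$ is again of the required form, and $\hat f$ is a bijection whose inverse is built analogously from $f^{-1}$.

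Combining the pieces yields the intertwining relation directly from condition~1: for any $s$ and $\phi$,
\[
  [P.f(\phi)]_s=\bigwedge_{\sigma}[f(\phi)]_\sigma=\bigwedge_{\sigma}[\phi]_{f(\sigma)}=\bigwedge_{\tau}[\phi]_\tau=[P.\phi]_{\hat f(s)},
\]
where $\sigma$ ranges over the paths of $s$ and $\tau=f(\sigma)$ ranges over the paths of $\hat f(s)$. The truth equivalence then follows as outlined in the first paragraph. I expect the blockwise factorization of $f$ on assignments and the verification that $\hat f$ preserves the existential-strategy shape to be the only non-routine steps; the remainder is bookkeeping with the definitions.
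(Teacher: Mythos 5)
Your proof is correct, but it follows a genuinely different route from the paper's. The paper argues by induction on the number of quantifier blocks: admissibility condition~2 lets it split $f$ into $f_1$ (acting on the leading block) and $f_2$ (admissible for the remaining prefix~$P'$), and from a winning strategy for $P.\phi$ it assembles a winning strategy for $P.f(\phi)$ directly --- when the leading block is existential, prepending the path $f_1^{-1}(\sigma)$ to an inductively obtained winning strategy for $P'.f_2([\phi]_\sigma)$; when it is universal, attaching below each top assignment $\tau$ a winning strategy for $P'.f_2([\phi]_{f_1(\tau)})$ --- and closes with a short chain of equalities using condition~1. You instead avoid induction altogether: you transport entire strategies along the assignment bijection $\sigma\mapsto f(\sigma)$, check via the blockwise factorization $f(\sigma)|_{B_j}=f_j(\sigma|_{B_j})$ that the image of a strategy's path set is again the path set of a strategy, and finish with the intertwining identity $[P.f(\phi)]_s=[P.\phi]_{\hat f(s)}$. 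Your map $\hat f$ is essentially the map of the paper's Lemma~\ref{lem:p} (applied to $f^{-1}$), which the paper proves separately and uses only later, to associate semantic symmetries to syntactic ones; so your route in effect derives Thm.~\ref{thm:admissible} as a corollary of that lemma, unifying two pieces that the paper keeps apart. What you gain is a stronger conclusion --- an evaluation-preserving bijection of $\strat_\exists(P)$ onto itself, hence a bijection between the winning strategies of $P.f(\phi)$ and those of $P.\phi$, rather than merely the existence of one winning strategy given another. The price is your reliance on the structural characterization of which assignment sets arise as path sets of existential strategies (universal blocks range freely, each existential block is a function of the preceding universal blocks); you assert this rather than prove it, but it is routine, and the paper's own one-paragraph proof of Lemma~\ref{lem:p} operates at the same level of rigor on exactly this point.
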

\begin{proof}
  Since the inverse of an admissible function is admissible, it suffices to show ``$\Rightarrow$''.
  To do so, we proceed by induction on the number of quantifier blocks in~$P$.

  There is nothing to show when $P$ is empty.
  Suppose the claim is true for all prefixes with $k$ quantifier blocks, and consider a prefix
  $P=Qx_1Qx_2\cdots Qx_iP'$ for some $i\in\{1,\dots,n\}$, $Q\in\{\forall,\exists\}$, and a prefix $P'$
  for $x_{i+1},\dots,x_n$ with at most $k$ quantifier blocks whose top quantifier is not~$Q$.
  By the admissibility, we may view $f$ as a pair of functions $f_1\colon\BF(\{x_1,\dots,x_i\})\to\BF(\{x_1,\dots,x_i\})$
  and $f_2\colon\BF(\{x_{i+1},\dots,x_n\})\to\BF(\{x_{i+1},\dots,x_n\})$, where $f_2$ is admissible for~$P'$.
  Let $s\in\strat_\exists(P)$ be a winning strategy for $P.\phi$.
  We will construct a winning strategy $t\in\strat_\exists(P)$ for $P.f(\phi)$.

  Case~1: $Q=\exists$. In this case, the upper $i$ levels of $s$ and $t$ consist of single paths.
  Let $\sigma\colon\{x_1,\dots,x_i\}\to\{\top,\bot\}$ be the assignment corresponding to the upper $i$ levels of~$s$.
  The subtree $s_\sigma$ of $s$ rooted at
  the end of $\sigma$ (level $i+1$) is a winning strategy for $P'.[\phi]_\sigma$.
  By induction hypothesis, $P'.f_2([\phi]_\sigma)$ has a winning strategy.
  Let $t$ have an initial path corresponding to the assignment $\tau=f_1^{-1}(\sigma)$ followed by a winning strategy
  of $P'.f_2([\phi]_\sigma)$. (Since $f_1$ is invertible and independent of $x_{i+1},\dots,x_n$, the
  assignment $\tau$ is well-defined.)
  Then $t$ is a winning strategy of $P.f(\phi)$.
  To see this, let $\rho$ be an arbitrary path of~$t$. We show that $[f(\phi)]_\rho=\top$.
  Indeed,
  \begin{alignat*}1
    [f(\phi)]_\rho &
    \overset{\vbox{\clap{\scriptsize\strut $t$ starts with $\tau$}\kern-5pt\clap{$\downarrow$}}} =
    [[f(\phi)]_\tau]_\rho
    \overset{\vbox{\clap{\scriptsize\strut Def. of $\tau$}\kern-5pt\clap{$\downarrow$}}} =
    [[f(\phi)]_{f_1^{-1}(\sigma)}]_\rho
    \overset{\vbox{\clap{\scriptsize\strut Def. of $f_1,f_2$}\kern-5pt\clap{$\downarrow$}}} =
    [[f_1(f_2(\phi))]_{f_1^{-1}(\sigma)}]_\rho \\[3pt]&
    \underset{\vbox{\clap{$\uparrow$}\kern-5pt\clap{\scriptsize\strut $f_1$ admissible}}} =
    [[f_1^{-1}(f_1(f_2(\phi)))]_{\sigma}]_\rho
    =
    [[f_2(\phi)]_{\sigma}]_\rho
    \underset{\vbox{\clap{$\uparrow$}\kern-5pt\clap{\scriptsize\strut $f_2$ admissible}}} =
    [f_2([\phi]_\sigma)]_\rho
    \underset{\vbox{\clap{$\uparrow$}\kern-5pt\clap{\scriptsize\strut choice of $t$}}} =
    \top.
  \end{alignat*}

  Case~2: $Q=\forall$. In this case, the upper $i$ levels of both $s$ and $t$ form complete binary trees in which every path
  corresponds to an assignment for the variables $x_1,\dots,x_i$.
  Let $\tau\colon\{x_1,\dots,x_i\}\to\{\top,\bot\}$ be such an assignment, and let
  $\sigma=f_1(\tau)$. 
  Let $s_\sigma$ be the subtree of~$s$ rooted at~$\sigma$.
  This is a winning strategy for the formula $P'.[\phi]_\sigma$ obtained from $P.\phi$ by instantiating the
  variables $x_1,\dots,x_i$ according to $\sigma$ and dropping the corresponding part of the prefix.
  By induction hypothesis, $P'.f_2([\phi]_\sigma)$ has a winning strategy.
  Pick one and use it as the subtree of $t$ rooted at~$\tau$.
  The same calculation as in Case~1 shows that $t$ is a winning strategy for $P.f(\phi)$.
  \qed
\end{proof}
Next we introduce the concept of a \emph{syntactic symmetry group.}
The attribute `syntactic' shall emphasize that this group acts
on formulas, in contrast to the `semantic' symmetry group introduced later,
which acts on strategies. Our distinction between syntactic and semantic 
symmetries corresponds to the distinction between the problem 
and solution symmetries made in CSP~\cite{DBLP:conf/aaai/CohenJJPS06}.
\begin{defi}\label{def:syn}
  Let $P.\phi$ be a QBF and let $G\times\BF(X)\to\BF(X)$ be a group action such that every $g\in G$ is admissible w.r.t.~$P$. 
  We call $G$ a \emph{syntactic symmetry group} for $P.\phi$ if $\phi$ and $g(\phi)$ are equivalent for all $g\in G$.
\end{defi}
It should be noticed that being a `symmetry group'
is strictly speaking not a property of the group itself but rather a property of the action
of $G$ on~$\BF(X)$. Further, we call a group action admissible if 
every $g \in G$ is admissible and we call $g \in G$ a \emph{symmetry}. 
Definition~\ref{def:syn} implies that when $G$ is a syntactic symmetry group for $P.\phi$, then for every
element $g\in G$ the QBF $P.g(\phi)$ has the same set of winning strategies as~$P.\phi$.
Note that this is not already a consequence of Thm.~\ref{thm:admissible}, which only said that $P.g(\phi)$ is true
if and only if $P.\phi$ is true, which does not imply that they have the same winning strategies.
\begin{ex}
  Consider the QBF $\Phi=P.\phi=\forall x\forall y\exists a\exists b. ((x\leftrightarrow a)\land(y\leftrightarrow b))$.
  A syntactic symmetry group for $\Phi$ is $G=\{\id,f\}$, where $f$ is an admissible function with
  $f(x)=y$, $f(y)=x$, $f(a)=b$, $f(b)=a$.

  Symmetries are often restricted to functions which map variables to literals. But this restriction
  is not necessary. Also the admissible function $g$ defined by $g(x)=x$, $g(y)=x\oplus y$, $g(a)=a$, $g(b)=a\oplus b$
  is a syntactic symmetry for~$\Phi$. 
\end{ex}

\section{Semantic Symmetries}

For the definition of semantic symmetry groups, no technical requirement like the admissibility is needed.
Every permutation of strategies that maps winning strategies to winning strategies is fine.
\begin{defi}\label{def:semg}
  Let $\Phi=P.\phi$ be a QBF and let $G$ be a group acting on $\strat_\exists(P)$ (or on $\strat_\forall(P)$).
  We call $G$ a \emph{semantic symmetry group} for $\Phi$ if for all $g\in G$ and
  all $s\in\strat_\exists(P)$ (or all $s\in\strat_\forall(P)$) we have $[\Phi]_s=[\Phi]_{g(s)}$.
\end{defi}
A single syntactic symmetry can give rise to several distinct semantic symmetries, as shown in the following example.
\begin{ex}\label{ex:7}
  Consider again $\Phi=P.\phi=\forall x\forall y\exists a\exists b. ((x\leftrightarrow a)\land(y\leftrightarrow b))$.
  The function $f$ of the previous example, which exchanges $x$ with $y$ and $a$ with~$b$ in the formula,
  can be translated to a semantic symmetry~$\tilde f$:
  \[
    \tilde f\Bigl(\ \vcenter{\hbox{\begin{tikzpicture}[scale=.5]

    \begin{scope}[circle,inner sep=1.5pt]
    \node[draw] (root) at (0,0) {};
    \node[draw] (0) at (-2,-1) {};
    \node[draw] (1) at (2,-1) {};
    \node[draw] (10) at (-3,-2) {};
    \node[draw] (11) at (-1,-2) {};
    \node[draw] (00) at (3,-2) {};
    \node[draw] (01) at (1,-2) {};
    \node[draw] (100) at (-3,-3) {};
    \node[draw] (110) at (-1,-3) {};
    \node[draw] (000) at (3,-3) {};
    \node[draw] (010) at (1,-3) {};
    \node[draw] (1000) at (-3,-4) {};
    \node[draw] (1100) at (-1,-4) {};
    \node[draw] (0000) at (3,-4) {};
    \node[draw] (0100) at (1,-4) {};
    \end{scope}

    \draw
    (root)edge node[above left]{\tiny$\bot$} (0)
    (root)edge node[above right]{\tiny$\top$} (1)
    (0)edge node[left]{\tiny$\bot$} (10)
    (0)edge node[right]{\tiny$\top$} (11)
    (1)edge node[left]{\tiny$\bot$} (01)
    (1)edge node[right]{\tiny$\top$} (00)
    (10)edge node[left]{\tiny$\alpha$} (100)
    (11)edge node[right]{\tiny$\gamma$} (110)
    (01)edge node[left]{\tiny$\epsilon$} (010)
    (00)edge node[right]{\tiny$\eta$} (000)
    (100)edge node[left]{\tiny$\beta$} (1000)
    (110)edge node[right]{\tiny$\delta$} (1100)
    (010)edge node[left]{\tiny$\zeta$} (0100)
    (000)edge node[right]{\tiny$\vartheta$} (0000);
    \end{tikzpicture}}}\ \Bigr)=
    \vcenter{\hbox{%
    \begin{tikzpicture}[scale=.5,baseline={(0,-2)}]

    \begin{scope}[circle,inner sep=1.5pt]
    \node[draw] (root) at (0,0) {};
    \node[draw] (0) at (-2,-1) {};
    \node[draw] (1) at (2,-1) {};
    \node[draw] (10) at (-3,-2) {};
    \node[draw] (11) at (-1,-2) {};
    \node[draw] (00) at (3,-2) {};
    \node[draw] (01) at (1,-2) {};
    \node[draw] (100) at (-3,-3) {};
    \node[draw] (110) at (-1,-3) {};
    \node[draw] (000) at (3,-3) {};
    \node[draw] (010) at (1,-3) {};
    \node[draw] (1000) at (-3,-4) {};
    \node[draw] (1100) at (-1,-4) {};
    \node[draw] (0000) at (3,-4) {};
    \node[draw] (0100) at (1,-4) {};
    \end{scope}

    \draw
    (root)edge node[above left]{\tiny$\bot$} (0)
    (root)edge node[above right]{\tiny$\top$} (1)
    (0)edge node[left]{\tiny$\bot$} (10)
    (0)edge node[right]{\tiny$\top$} (11)
    (1)edge node[left]{\tiny$\bot$} (01)
    (1)edge node[right]{\tiny$\top$} (00)
    (10)edge node[left]{\tiny$\beta$} (100)
    (11)edge node[right]{\tiny$\zeta$} (110)
    (01)edge node[left]{\tiny$\delta$} (010)
    (00)edge node[right]{\tiny$\vartheta$} (000)
    (100)edge node[left]{\tiny$\alpha$} (1000)
    (110)edge node[right]{\tiny$\epsilon$} (1100)
    (010)edge node[left]{\tiny$\gamma$} (0100)
    (000)edge node[right]{\tiny$\eta$} (0000);
    \end{tikzpicture}}}
  \]
  This symmetry exchanges 
  the labels of level 3 and level 4 and swaps the existential parts of the 
  two paths in the middle.
  Regardless of the choice of $\alpha,\dots,\eta\in\{\bot,\top\}$, the strategy on the left is winning if and only if
  the strategy on the right is winning, so $\tilde f$ maps winning strategies to winning strategies.

  Some further semantic symmetries can be constructed from~$f$.
  For example, in order to be a winning strategy, it is necessary that $\alpha=\beta=\bot$.
  So we can take a function that just flips $\alpha$ and $\beta$ but does not touch the rest of the tree.
  For the same reason, also a function that just flips $\eta$ and $\vartheta$ but does not affect the rest of the
  tree is a semantic symmetry.
  The composition of these two functions and the function $\tilde f$ described before (in an arbitrary order)
  yields a symmetry that exchanges $\gamma$ with $\zeta$ and $\delta$ with $\epsilon$ but keeps $\alpha,\beta,\eta,\vartheta$
  fixed. Also this function is a semantic symmetry.
\end{ex}
The construction described in the example above works in general.
Recall that for an assignment
$\sigma\colon X\to\{\top,\bot\}$ and a function $f\colon \BF(X)\to\BF(X)$,
the assignment $f(\sigma)\colon X\to\{\top,\bot\}$ is defined by
$f(\sigma)(x) = [f(x)]_\sigma$ for $x\in X$.
\begin{lem}\label{lem:p}
  Let $P$ be a prefix for $X$ and
  $g$ be an element of a group acting admissibly on $\BF(X)$.
  Then there is a function
  $f\colon\strat_\exists(P)\to\strat_\exists(P)$ such that for all $s\in\strat_\exists(P)$
  we have that $\sigma$ is a path of $f(s)$ if and only if $g(\sigma)$ is a path of~$s$.
\end{lem}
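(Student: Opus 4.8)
The plan is to build $f(s)$ from its set of paths. A strategy is completely determined by the set of assignments attached to its root-to-leaf paths, so it suffices, for each $s\in\strat_\exists(P)$, to exhibit a \emph{valid} path set and to check the claimed path condition; I would take the path set of $f(s)$ to be $\{\sigma:g(\sigma)\text{ is a path of }s\}$, where $g(\sigma)$ denotes the assignment induced by $g$ as in the preceding paragraph.

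Before that I would record two facts about the map $\sigma\mapsto g(\sigma)$ on assignments. First, it is a bijection: applying condition~1 of admissibility for $g$ to the formula $g^{-1}(x)$ gives $[g^{-1}(x)]_{g(\sigma)}=[g(g^{-1}(x))]_\sigma=\sigma(x)$, so the map induced by $g^{-1}$ inverts it (here I use that $g^{-1}$ is admissible as well). Second, and crucially, condition~2 forces this bijection to be \emph{block-diagonal}: since $g(x)$ contains only variables from the quantifier block $B$ of $x$, the value $g(\sigma)(x)=[g(x)]_\sigma$ depends only on $\sigma|_B$. Hence for every block $B$ there is a well-defined map $g_B$ on the assignments of $B$ with $g(\sigma)|_B=g_B(\sigma|_B)$, and because the global bijection factors as the product of the $g_B$ over nonempty factors, each $g_B$ is itself a bijection.

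The reduction I would then use is the following characterization, proved routinely by reading a strategy off its paths and back: a set $T$ of assignments is the path set of an existential strategy for $P$ if and only if (i) for every assignment $\rho$ of the universal variables $U=\{x_i:Q_i=\forall\}$ there is a $\sigma\in T$ with $\sigma|_U=\rho$, and (ii) for every $i$ with $Q_i=\exists$, any two $\sigma,\sigma'\in T$ that agree on $x_1,\dots,x_{i-1}$ also agree on $x_i$. Condition~(ii) says exactly that existential nodes have a single child, while (i) together with (ii) forces universal nodes to branch both ways. Setting $T'=\{\sigma:g(\sigma)\text{ is a path of }s\}$, the equivalence ``$\sigma$ is a path of $f(s)$ iff $g(\sigma)$ is a path of $s$'' holds by construction, so everything comes down to verifying (i) and (ii) for $T'$.

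For~(i), given a universal $\rho$ I would transport it blockwise through the $g_B$ to the universal assignment $\rho'$ with $\rho'|_B=g_B(\rho|_B)$ on each universal block, realize $\rho'$ as $\tau|_U$ for a path $\tau$ of $s$ via~(i) for $s$, and put $\sigma=g^{-1}(\tau)$; block-diagonality and injectivity of each $g_B$ then give $\sigma|_U=\rho$. The main obstacle is~(ii). The difficulty is that $g(x_i)$ may involve variables of the same block as $x_i$ with \emph{larger} index, so one cannot argue variable by variable. Instead I would argue block by block: if $\sigma,\sigma'\in T'$ agree on $x_1,\dots,x_{i-1}$ and $B$ is the (existential) block of $x_i$, then by block-diagonality $g(\sigma)$ and $g(\sigma')$ agree on all blocks preceding $B$; applying~(ii) for $s$ in turn to each variable of $B$ shows they agree on all of $B$; finally, inverting $g_B$ yields $\sigma|_B=\sigma'|_B$, in particular $\sigma(x_i)=\sigma'(x_i)$. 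This blockwise handling of the within-block coupling is the one genuinely delicate point; the rest is bookkeeping.
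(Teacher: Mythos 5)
Your proof is correct, and it takes a genuinely different (and more detailed) route than the paper's own argument, which is a three-sentence sketch: the paper observes that by condition~2 of admissibility $g$ ``acts independently on variables belonging to different quantifier blocks,'' reduces to the case of a single-block prefix, and then notes the claim is obvious both for a purely existential block (where $s$ is a single path) and for a purely universal block (where $s$ is the complete binary tree). The key idea---block-diagonality of the induced map on assignments---is common to both arguments, but your architecture differs: instead of reducing to single-block prefixes and implicitly gluing the pieces back together (the gluing being exactly what the paper leaves unsaid), you define $f(s)$ globally by its path set $\{\sigma : g(\sigma)\text{ is a path of }s\}$ and verify that this set is the path set of an existential strategy via your characterization (i)--(ii). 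This buys rigor precisely where the paper is thin: your verification of (ii)---first using block-diagonality to see that $g(\sigma)$ and $g(\sigma')$ agree on all blocks preceding $B$, then walking condition (ii) for $s$ through the variables of $B$ one by one, then inverting the per-block bijection $g_B$---is the honest content behind the paper's ``it suffices to consider a single quantifier block,'' and your observation that one cannot argue variable by variable (because $g(x_i)$ may involve later variables of the same block) pinpoints the exact subtlety the sketch glosses over. The paper's version buys brevity by appealing to the reader's picture of strategy trees; yours buys a complete, checkable proof at the modest cost of stating and routinely proving the path-set characterization of strategies and the bijectivity of the maps $g_B$.
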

\begin{proof}
  Since $g$ is an admissible function, it acts independently on variables belonging to different
  quantifier blocks. Therefore it suffices to consider the case where $P$ consists of a single
  quantifier block. If all quantifiers are existential, then $s$ consists of a single path,
  so the claim is obvious. If there are only universal quantifiers, then $s$ consists of a complete binary
  tree containing all possible paths, so the claim is obvious as well. \qed
\end{proof}
Starting from a syntactic symmetry group~$G\syn$, we can consider all the semantic
symmetries that can be obtained from it like in the example above. All these
semantic symmetries from a semantic symmetry group, which we call the semantic
symmetry group associated to~$G\syn$.
\begin{defi}\label{def:1}
  Let $P$ be a prefix for~$X$ and let $G\syn\times\BF(X)\to\BF(X)$ be an admissible group action.
  Let $G\sem$ be the set of all bijective functions $f\colon\strat_\exists(P)\to\strat_\exists(P)$ such that
  for all $s\in\strat_\exists(P)$ and every path $\sigma$ of $f(s)$ there exists a $g\in G\syn$
  such that $g(\sigma)$ is a path of~$s$.
  This $G\sem$ is called the \emph{associated group} of~$G\syn$.
\end{defi}
Again, it would be formally more accurate but less convenient to say that the action of $G\sem$ on $\strat_\exists(P)$ is
associated to the action of $G\syn$ on~$\BF(X)$.
\begin{thm}
  If $G\syn$ is a syntactic symmetry group for a QBF~$\Phi$,
  then the associated group $G\sem$ of $G\syn$ is a semantic symmetry group for~$\Phi$.
\end{thm}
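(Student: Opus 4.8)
The plan is to reduce everything to the characterization ``$[\Phi]_s=\top$ if and only if $s$ is an existential winning strategy'', where I write $\Phi=P.\phi$. The engine of the whole argument is a single pointwise identity. Let $f\in G\sem$ and $s\in\strat_\exists(P)$, and let $\sigma$ be any complete path of $f(s)$, viewed as an assignment $X\to\{\top,\bot\}$. By Definition~\ref{def:1} there is some $g\in G\syn$ such that $g(\sigma)$ is a path of~$s$. Since $g$ is admissible, condition~(1) of Definition~\ref{def:adm} gives $[g(\phi)]_\sigma=[\phi]_{g(\sigma)}$, and since $G\syn$ is a syntactic symmetry group, $\phi$ and $g(\phi)$ are equivalent, so $[\phi]_\sigma=[g(\phi)]_\sigma$. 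Combining these,
\[
  [\phi]_\sigma = [g(\phi)]_\sigma = [\phi]_{g(\sigma)},
\]
so the value of $\phi$ along the path $\sigma$ of $f(s)$ equals its value along the path $g(\sigma)$ of~$s$.

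From this I obtain the forward implication: if $s$ is a winning strategy, then so is $f(s)$. Indeed, if $[\phi]_\tau=\top$ for every path $\tau$ of~$s$, then for every path $\sigma$ of $f(s)$ the displayed identity yields $[\phi]_\sigma=[\phi]_{g(\sigma)}=\top$, because $g(\sigma)$ is a path of~$s$. Hence every path of $f(s)$ satisfies $\phi$, i.e.\ $[\Phi]_s=\top\Rightarrow[\Phi]_{f(s)}=\top$, and this holds for every $f\in G\sem$ and every~$s$.

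The main obstacle is the converse, and here I use that $G\sem$ is a group. First I check closure under composition: if $f_1,f_2\in G\sem$ and $\sigma$ is a path of $f_1(f_2(s))$, then applying the defining property of $f_1$ to $f_2(s)$ yields $g_1$ with $g_1(\sigma)$ a path of $f_2(s)$, and applying the property of $f_2$ to $s$ yields $g_2$ with $(g_2g_1)(\sigma)=g_2(g_1(\sigma))$ a path of~$s$; as $g_2g_1\in G\syn$, this gives $f_1\circ f_2\in G\sem$. Since $\id\in G\sem$ (take $g=e$, so $e(\sigma)=\sigma$) and $\strat_\exists(P)$ is a finite set, every $f\in G\sem$ has finite order $m$, whence $f^{-1}=f^{m-1}\in G\sem$ by the closure just established. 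Thus $G\sem$ is a subgroup of the symmetric group on $\strat_\exists(P)$ and acts on it by evaluation, so it is a group in the sense required by Definition~\ref{def:semg}.

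Finally I apply the forward implication to $f^{-1}\in G\sem$ and the strategy $f(s)$, obtaining $[\Phi]_{f(s)}=\top\Rightarrow[\Phi]_{f^{-1}(f(s))}=[\Phi]_s=\top$. Together with the forward implication this gives $[\Phi]_s=\top\iff[\Phi]_{f(s)}=\top$; since both values lie in $\{\top,\bot\}$, they are equal, which is exactly the semantic symmetry property for all $f\in G\sem$ and $s\in\strat_\exists(P)$. I expect the closure-under-inverse step, settled via finiteness of $\strat_\exists(P)$, to be the only genuinely non-routine point; once $f^{-1}$ is available, the converse is immediate and everything else rests on the pointwise identity above.
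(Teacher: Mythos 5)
Your proof is correct and takes essentially the same route as the paper's: the core of both is the pointwise identity $[\phi]_\sigma=[g(\phi)]_\sigma=[\phi]_{g(\sigma)}$ (equivalence plus admissibility) applied to each path of $f(s)$, showing winning strategies map to winning strategies; your extra work on closure, identity, and inverses merely fills in what the paper dismisses as ``Obviously, $G\sem$ is a group,'' and your inverse step makes explicit how the one-directional claim yields the full equality $[\Phi]_s=[\Phi]_{f(s)}$. One cosmetic slip: with the paper's definition $f(\sigma)(x)=[f(x)]_\sigma$, the induced map on assignments reverses composition, so in your closure argument $g_2(g_1(\sigma))$ equals $(g_1g_2)(\sigma)$ rather than $(g_2g_1)(\sigma)$ --- harmless, since either product lies in $G\syn$ and witnesses the required membership.
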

\begin{proof}
  Let $\Phi=P.\phi$.
  Obviously, $G\sem$ is a group. To show that it is a symmetry group,
  let $s\in\strat_\exists(P)$ be a winning strategy for~$\Phi$, and let $g\sem\in G\sem$.
  We show that $g\sem(s)$ is again a winning strategy.
  Let $\sigma$ be a path of~$g\sem(s)$.
  By Def.~\ref{def:1},
   there exists a $g\syn\in G\syn$ such that $g\syn(\sigma)$ is a path of~$s$.
  Since $s$ is a winning strategy, $[\phi]_{g\syn(\sigma)}=\top$, and since $G\syn$ is a symmetry group,
  $[\phi]_{g\syn(\sigma)}=[g\syn(\phi)]_{g\syn(\sigma)}$.
  By admissibility $[g\syn(\phi)]_{g\syn(\sigma)}=[\phi]_\sigma=\top$.
  Hence every path of $g\sem(s)$ is a satisfying assignment, so $g\sem(s)$ is a winning
  strategy. \qed
\end{proof}
The distinction between a syntactic and a semantic symmetry groups is immaterial when the
prefix consists of a single quantifier block. In particular, SAT problems can be
viewed as QBFs in which all quantifiers are~$\exists$. For such formulas, each tree in $\strat_\exists(P)$
consists of a single paths, so in this case the requirement $\forall\ s\in\strat_\exists(P):[\Phi]_s=[\Phi]_{g(s)}$
from Def.~\ref{def:semg} boils down to the requirement that $[\phi]_\sigma=[\phi]_{f(\sigma)}$ should hold
for all assignments $\sigma\colon X\to\{\top,\bot\}$. This reflects the 
condition of Def.~\ref{def:syn} that
$\phi$ and $f(\phi)$ are equivalent. 

As we have seen in Example~\ref{ex:7}, there is more diversity for prefixes with several quantifier
blocks. In such cases, a single element of a syntactic symmetry group can give
rise to a lot of elements of the associated semantic symmetry group. In fact, the associated
semantic symmetry group is very versatile. For example, when there are
two strategies $s,s'\in\strat_\exists(P)$ and some element~$f$ of an associated semantic
symmetry group $G\sem$ such that $f(s)=s'$, then there is also an element $h\in G\sem$
with $h(s)=s'$, $h(s')=s$ and $h(r)=r$ for all $r\in\strat_\exists(P)\setminus\{s,s'\}$. The
next lemma is a generalization of this observation which indicates that $G\sem$ contains
elements that exchange subtrees across strategies.
\begin{lem}\label{lem:10}
  Let $P=Q_1x_1\dots Q_nx_n$
  be a prefix and $G\syn\times\BF(X)\to\BF(X)$ be an admissible group action.
  Let $G\sem$ be the associated group of~$G\syn$.
  Let $s\in\strat_\exists(P)$ and let $\sigma$ be a path of~$s$.
  Let $i\in\{1,\dots,n\}$ be such that $[x_j]_\sigma=[g(x_j)]_\sigma$ for all $g\in G\syn$
  and all $j<i$.

  Further, let $f\in G\sem$ and $s'=f(s)$. Let $\sigma'$ be a path of $s'$
  such that the first $i-1$
  edges of $\sigma'$ agree with the first $i-1$ edges of~$\sigma$.
  By the choice of $i$ such a $\sigma'$ exists.
  Let $t,t'\in\strat_\exists(Q_ix_i\dots Q_nx_n)$ be the subtrees of $s,s'$ rooted at
  the the $i$th node of $\sigma,\sigma'$, respectively, and
  let $s''\in\strat_\exists(P)$ be the strategy obtained from $s$ by replacing $t$ by~$t'$, as illustrated in the picture
  below.
  Then there exists $h\in G\sem$ with $h(s)=s''$.
\end{lem}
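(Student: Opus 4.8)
The plan is to exhibit a single, very simple candidate for $h$ and then verify that it lies in $G\sem$ by checking the defining condition of Def.~\ref{def:1}. The natural candidate, suggested by the versatility remark preceding the lemma, is the transposition $h$ that swaps $s$ and $s''$ and fixes every other strategy. This $h$ is visibly a bijection of $\strat_\exists(P)$ and satisfies $h(s)=s''$, so everything reduces to showing that for every $r\in\strat_\exists(P)$ and every path $\rho$ of $h(r)$ there is a $g\in G\syn$ with $g(\rho)$ a path of~$r$. For all $r\notin\{s,s''\}$ we have $h(r)=r$ and may take $g=\id$, so only $r=s$ and $r=s''$ need work.

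Write $p$ for the common first $i-1$ edges of $\sigma$ and $\sigma'$, i.e.\ the partial assignment to $x_1,\dots,x_{i-1}$ read off from~$\sigma$. A path passes through the $i$th node of $\sigma$ exactly when its first $i-1$ edges equal~$p$, and replacing $t$ by $t'$ affects precisely those paths. Hence the paths of $s$ and of $s''$ split into two groups: those whose first $i-1$ edges differ from $p$, which are literally common to $s$ and~$s''$, and those whose first $i-1$ edges equal~$p$, which in $s$ read $p$ followed by a path of~$t$ and in $s''$ read $p$ followed by a path of~$t'$. The key observation is that, since $t'$ is the subtree of $s'$ at the $i$th node of $\sigma'$ and $\sigma'$ begins with~$p$, the latter paths of $s''$ are exactly the paths of $s'=f(s)$ that pass through the $i$th node of~$\sigma'$.

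For $r=s$ I would argue as follows. A path $\rho$ of $h(s)=s''$ whose first $i-1$ edges differ from $p$ is also a path of~$s$, so $g=\id$ works. A path $\rho$ whose first $i-1$ edges equal $p$ is, by the observation above, a path of $s'=f(s)$; since $f\in G\sem$, Def.~\ref{def:1} applied to $f$ yields a $g\in G\syn$ with $g(\rho)$ a path of~$s$. This settles $r=s$ and uses only that $f\in G\sem$.

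The case $r=s''$ is the heart of the matter and the step I expect to be hardest. Here $h(s'')=s$, and for a path $\rho$ of $s$ I must find $g\in G\syn$ with $g(\rho)$ a path of~$s''$. If the first $i-1$ edges of $\rho$ differ from~$p$, then $\rho$ is also a path of $s''$ and $g=\id$ suffices. Otherwise $\rho$ passes through the $i$th node of~$\sigma$; since $s=f^{-1}(s')$ and $G\sem$ is a group, Def.~\ref{def:1} applied to $f^{-1}\in G\sem$ gives a $g\in G\syn$ with $g(\rho)$ a path of~$s'$. To finish I need $g(\rho)$ to pass through the $i$th node of~$\sigma'$, for then it is one of the paths of $s''$ identified in the second paragraph. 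This is exactly where the hypothesis on $i$ enters: since $g$ is admissible it acts within quantifier blocks, and the assumption $[x_j]_\sigma=[g(x_j)]_\sigma$ for all $g\in G\syn$ and all $j<i$ should force $[g(x_j)]_\rho=[x_j]_\sigma$ for $j<i$, so that $g$ preserves the prefix~$p$. The delicate point — and the main obstacle — is reconciling the per-path symmetry handed to us by Def.~\ref{def:1} with the requirement that it fix the whole prefix $p$: one must use admissibility to confine the dependence of $g(x_j)$ for $j<i$ to the variables $x_1,\dots,x_{i-1}$, on which $\rho$ and $\sigma$ agree, so that the value $[g(x_j)]_\rho$ is pinned to $p_j$. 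Once this prefix-preservation is established, both membership conditions hold, $h\in G\sem$, and $h(s)=s''$ as required.
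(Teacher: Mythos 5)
Your candidate $h$, the three-way case split over $r\notin\{s,s''\}$, $r=s$, $r=s''$, and the use of $f$ (for $r=s$) and of $f^{-1}$ (for $r=s''$) reproduce the paper's own proof exactly, and the first two cases are handled correctly. The problem is the step you yourself flag as the main obstacle: your proposed resolution of it is not valid. Admissibility does \emph{not} confine the dependence of $g(x_j)$, $j<i$, to the variables $x_1,\dots,x_{i-1}$; it only confines $g(x_j)$ to the quantifier block of $x_j$, and that block may extend to position $i$ and beyond (namely whenever $Q_j=\dots=Q_i$). In that situation the hypothesis $[x_j]_\sigma=[g(x_j)]_\sigma$, which concerns the single path $\sigma$, says nothing about $[g(x_j)]_\rho$ for the other paths $\rho$ through the $i$th node of $\sigma$, and the prefix preservation you need can genuinely fail.

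Concretely, let $P=\forall x_1\forall x_2\exists x_3$, let $g$ be the admissible involution swapping $x_1$ and $x_2$, and $G\syn=\{\id,g\}$. Let $s$ answer $\top,\top,\bot,\top$ on the universal branches $(x_1,x_2)=(\top,\top),(\top,\bot),(\bot,\top),(\bot,\bot)$, take $\sigma=(\top,\top,\top)$ and $i=2$; the hypothesis holds because $[g(x_1)]_\sigma=[x_2]_\sigma=[x_1]_\sigma$. Let $f\in G\sem$ be the lift of $g$ from Lemma~\ref{lem:p}, so that $s'=f(s)$ answers $\top,\bot,\top,\top$ and $s''$ answers $\top,\bot,\bot,\top$. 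For the path $\rho=(\top,\bot,\top)$ of $s$, the only element of $G\syn$ mapping $\rho$ to a path of $s'$ is $g$ itself, and $g(\rho)=(\bot,\top,\top)$ does not pass through the second node of $\sigma'$; hence $g(\rho)$ is not a path of $s''$ and your transposition is not in $G\sem$. Worse, in this example membership in $G\sem$ forces the preimage of $s$ and the preimage of $s'$ both to lie in $\{s,s'\}$, so every element of $G\sem$ maps $\{s,s'\}$ onto itself and \emph{no} $h\in G\sem$ with $h(s)=s''$ exists at all: the gap cannot be closed without an extra hypothesis. For comparison, the paper's proof leaps over the identical point with the unjustified assertion that ``by assumption on $G\syn$, the element $g$ fixes the first $i-1$ edges of $\rho$'', so you have faithfully reproduced its structure including its weakest step. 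The step (and the lemma) can be justified in the situation where it is actually invoked in Thm.~\ref{thm:main}, namely when $Q_i=\exists$: there, for each $j<i$, the block of $x_j$ either lies entirely within $\{x_1,\dots,x_{i-1}\}$, or it is an existential block on which all paths of $s$ through the $i$th node of $\sigma$ agree with $\sigma$, and only then is $[g(x_j)]_\rho$ pinned to $[g(x_j)]_\sigma$ as you claim.
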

\begin{center}
  \begin{tikzpicture}[scale=.5]
    \node (s') at (0,-2) {};
    \draw[fill=lightgray] (0,0) -- (1,-2) node[right] {$s'$} -- (2,-4)--(-2,-4)--cycle;
    \draw[xshift=-.5cm] (0,-2)--(.5,-3)node[right] {$t'$} --(1,-4)--(-1,-4)--cycle;
    \draw (0,0)decorate[decoration={snake,amplitude=.2mm,segment length=1mm}] {--(-1,-4)}
     node[below] {$\mathstrut\sigma'$};
    \begin{scope}[xshift=7cm]
    \node (s) at (0,-2) {};
    \draw (0,0) -- (1,-2) node[right] {$s$} -- (2,-4)--(-2,-4)--cycle;
    \draw[xshift=-.5cm] (0,-2)--(.5,-3)node[right] {$t$} --(1,-4)--(-1,-4)--cycle;
    \draw (0,0)decorate[decoration={snake,amplitude=.2mm,segment length=1mm}] {--(-1,-4)}
     node[below] {$\mathstrut\sigma$};
    \end{scope}
    \begin{scope}[xshift=14cm]
    \node (s'') at (0,-2) {};
    \draw (0,0) -- (1,-2) node[right] {$s''$} -- (2,-4)--(-2,-4)--cycle;
    \draw[fill=lightgray,xshift=-.5cm] (0,-2)--(.5,-3)node[right] {$t'$} --(1,-4)--(-1,-4)--cycle;
    \draw (0,0)decorate[decoration={snake,amplitude=.2mm,segment length=1mm}] {--(-1,-4)}
    node[below] {$\mathstrut\sigma'$};
    \end{scope}
    \draw[shorten <=1cm,shorten >=1cm,->] (s) edge[bend right] node[above,yshift=2mm] {$f$} (s');
    \draw[shorten <=1cm,shorten >=1cm,->] (s) edge[bend left] node[above,yshift=2mm] {$h$} (s'');
  \end{tikzpicture}
\end{center}
\begin{proof}
  Define $h\colon\strat_\exists(P)\to\strat_\exists(P)$ by
  $h(s)=s''$, $h(s'')=s$, and $h(r)=r$ for all $r\in\strat_\exists(P)\setminus\{s,s''\}$.
  Obviously, $h$ is a bijective function from $\strat_\exists(P)$ to $\strat_\exists(P)$.
  To show that $h$ belongs to~$G\sem$, we must show that for every
  $r\in\strat_\exists(P)$ and every path $\rho$ of $h(r)$ there exists $g\in G\syn$ such that
  $g(\rho)$ is a path of~$r$.
  For $r\in\strat_\exists(P)\setminus\{s,s''\}$ we have $h(r)=r$, so there is nothing to show.

  Consider the case $r=s$. Let $\rho$ be a path of $h(r)=s''$. If $\rho$ does not end in the
  subtree~$t'$, then the same path $\rho$ also appears in $r$ and we can take $g=\id$.
  Now suppose that $\rho$ does end in the subtree~$t'$.
  Then $\rho$ is also a path of $s'=f(s)$, because all paths of $s$ and $s'$ ending in $t$ or $t'$
  agree above the $i$th node.
  Since $f\in G\sem$, there exists $g\in G\syn$ such that $g(\rho)$ is a path of~$s$.

  Finally, consider the case $r=s''$. Let $\rho$ be a path of $h(r)=s$. If $\rho$ does not end
  in the subtree~$t$, then the same path $\rho$ also appears in $r$ and we can take $g=\id$.
  Now suppose that $\rho$ does end in the subtree~$t$.
  Then the first $i-1$ edges of $\rho$ agree with those of~$\sigma$.
  Since $s=f^{-1}(s')$, there exists $g\in G\syn$ such that $g(\rho)$ is a path of~$s'$.
  By assumption on $G\syn$, the element $g$ fixes first $i-1$ edges of $\rho$, so
  $g(\rho)$ ends in~$t'$ and is therefore a path of~$s''$, as required.
  \qed
\end{proof}

\section{Existential Symmetry Breakers}

The action of a syntactic symmetry group of a QBF $P.\phi$ splits $\BF(X)$ into orbits.
For all the formulas $\psi$ in the orbit of $\phi$, the QBF $P.\psi$ has exactly the same winning
strategies as~$P.\phi$. For finding a winning strategy, we therefore have the freedom of exchanging
$\phi$ with any other formula in its orbit.

The action of a semantic symmetry group on $\strat_\exists(P)$ splits $\strat_\exists(P)$ into orbits.
In this case, every orbit either contains only winning strategies for $P.\phi$ or no winning strategies
for $P.\phi$ at all:
\begin{center}
  \begin{tikzpicture}[scale=.5]
    \fill[lightgray] (0,0)--(0,1)--(2.5,2)--(3,0)--cycle (8,0)--(7,2)--(9,4)--(12,4)--(12,1.5)--(10.5,2)--(10,0)--cycle;
    \draw[thick] (0,0) rectangle (12,4);
    \draw[thick] (0,1)--(2.5,2) (3,0) -- (2,4) (4,0)--(5,4) (6,4)--(8,0) (10,0)--(11,4) (12,1.5)--(10.5,2) (7,2)--(9,4);
    \draw (1,.5) node {$\bullet$} -- (-1,.5) node[left] {\parbox{4cm}{\raggedleft\footnotesize an orbit containing only winning strategies}};
    \draw (1,3) node {$\bullet$} -- (-1,3) node[left] {\parbox{4cm}{\raggedleft\footnotesize an orbit containing no winning strategies}};
  \end{tikzpicture}
\end{center}
Instead of checking all elements of $\strat_\exists(P)$, it is sufficient to check one element per orbit.
If a winning strategy exists, then any such sample contains one.

To avoid inspecting strategies that belong to the same orbit symmetry
breaking  
introduces a formula $\psi\in\BF(X)$ which is such that $P.\psi$ has at least one winning strategy in every
orbit. Such a formula is called a \emph{symmetry breaker}.
The key observation is that instead of solving $P.\phi$, we can solve $P.(\phi\land\psi)$.
Every winning strategy for the latter will be a winning strategy for the former, and if the former has
at least one winning strategy, then so does the latter.
By furthermore allowing transformations of $\phi$ via a syntactic symmetry group, we get the following definition.
\begin{defi}
  Let $P$ be a prefix for~$X$, let $G\syn$ be a group acting admissibly on $\BF(X)$ and let $G\sem$ be a group action on $\strat_\exists(P)$.
  A formula $\psi\in\BF(X)$ is called an \emph{existential symmetry breaker} for~$P$
  (w.r.t.\ the actions of $G\syn$ and $G\sem$)
  if for every $s\in\strat_\exists(P)$ there exist $g\syn\in G\syn$ and $g\sem\in G\sem$
  such that $[P.g\syn(\psi)]_{g\sem(s)}=\top$.
\end{defi}
\hangindent=-2cm\hangafter=1
\begin{ex}
  Consider the formula $\Phi=P.\phi=\forall x\exists y\exists z.(y\leftrightarrow z)$.
  All the elements of $\strat_\exists(P)$ have the form depicted on the right.
  As syntactic symmetries, we have the admissible functions
  $f,g\colon\BF(X)\to\BF(X)$ defined by
  $f(x)=x$, $f(y)=z$, $f(z)=y$, and
  $g(x)=x$, $g(y)=\neg y$, $g(z)=\neg z$,
  respectively, so we can take $G\syn=\<f,g>$
  as a syntactic symmetry group.
  \hfill\null\smash{\rlap{\hbox to20mm{\hss\quad
  \begin{tikzpicture}[scale=.5,baseline=-1.7cm]
    \begin{scope}[circle,inner sep=1.5pt]
    \node[draw] (root) at (0,0) {};
    \node[draw] (0) at (-1,-1) {};
    \node[draw] (1) at (1,-1) {};
    \node[draw] (01) at (-1,-2) {};
    \node[draw] (11) at (1,-2) {};
    \node[draw] (011) at (-1,-3) {};
    \node[draw] (111) at (1,-3) {};
    \end{scope}
    \draw
    (root)edge node[left]{\tiny$\bot$} (0)
    (root)edge node[right]{\tiny$\top$} (1)
    (0)edge node[left]{\tiny$\alpha$} (01)
    (1)edge node[right]{\tiny$\gamma$} (11)
    (01)edge node[left]{\tiny$\beta$} (011)
    (11)edge node[right]{\tiny$\delta$} (111);
  \end{tikzpicture}\hss}}}

  \hangindent=0pt
  According to standard techniques~\cite{audemard2007efficient} the formula
  $\lnot y$ is a symmetry breaker for $P.\phi$. 
  When considering $G\syn$ together with $G\sem=\{\id\}$ (what would be 
  sufficient for SAT), the complications for QBF become obvious.  
  The orbit of $\neg y$ is $O = \{y,z,\neg y,\neg z\}$. 
  Now consider the strategy with $\alpha=\top,\beta=\bot,\gamma=\bot,\delta=\top$. 
  For any $\psi \in O$, this strategy does not 
  satisfy $P.\psi$, because $\psi$ is true on one branch, but 
  false on the other. Using semantic symmetries can overcome 
  this problem. 

  Semantic symmetries can act differently on different paths.  Let
  $f_1\colon\strat_\exists(P)\to\strat_\exists(P)$ be the function which
  exchanges $\alpha,\beta$ and leaves $\gamma,\delta$ fixed, let
  $g_1\colon\strat_\exists(P)\to\strat_\exists(P)$ be the function which
  replaces $\alpha,\beta$ by $\neg\alpha,\neg\beta$ and leaves $\gamma,\delta$
  fixed, and let $f_2,g_2\colon\strat_\exists(P)\to\strat_\exists(P)$ be
  defined like $f_1,g_1$ but with the roles of $\alpha,\beta$ and
  $\gamma,\delta$ exchanged. The group $G\sem=\<f_1,g_1,f_2,g_2>$ is a semantic
  symmetry group for~$\Phi$. This group splits $\strat_\exists(P)$ into four orbits:
  one orbit consists of all strategies with $\alpha=\beta$, $\gamma=\delta$,
  one consists of those with $\alpha=\beta$, $\gamma\neq\delta$,
  one consists of those with $\alpha\neq\beta$, $\gamma=\delta$,
  and on consists of those with $\alpha\neq\beta$, $\gamma\neq\delta$.

  Taking $G\syn=\{\id\}$ together with this group~$G\sem$, the formula $\neg y$ is a symmetry breaker, because
  each orbit contains one element with $\alpha=\gamma=\bot$.
\end{ex}
\par\noindent
The following theorem is the main property of symmetry breakers.
\begin{thm}\label{thm:mainex}
  Let $\Phi=P.\phi$ be a QBF.
  Let $G\syn$ be a syntactic symmetry group and $G\sem$ be a semantic symmetry group acting on~$\strat_\exists(P)$.
  Let $\psi$ be an existential symmetry breaker for $G\syn$ and $G\sem$.
  Then $P.\phi$ is true iff $P.(\phi\land\psi)$ is true.
\end{thm}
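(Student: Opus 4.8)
The plan is to prove both directions, with the forward direction ($P.\phi$ true $\Rightarrow$ $P.(\phi\land\psi)$ true) carrying all the weight. For the easy direction, suppose $P.(\phi\land\psi)$ is true and fix a winning strategy $s\in\strat_\exists(P)$, i.e.\ $[P.(\phi\land\psi)]_s=\top$. By Lemma~\ref{lem:thread} we have $[P.(\phi\land\psi)]_s=[P.\phi]_s\land[P.\psi]_s$, so $[P.\phi]_s=\top$ and $s$ is already a winning strategy for $P.\phi$; hence $P.\phi$ is true. First I would dispatch this direction in a single line.

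For the forward direction, suppose $P.\phi$ is true and fix a winning strategy $s$, so $[P.\phi]_s=\top$. The idea is to feed $s$ into the symmetry breaker, collect the resulting transformations, and then undo the syntactic part while leaving $\phi$ intact. By the definition of an existential symmetry breaker there are $g\syn\in G\syn$ and $g\sem\in G\sem$ with $[P.g\syn(\psi)]_{g\sem(s)}=\top$. Setting $s'=g\sem(s)$, the semantic symmetry property (Def.~\ref{def:semg}) gives $[P.\phi]_{s'}=[P.\phi]_s=\top$, so $s'$ is still a winning strategy for $\phi$ while simultaneously satisfying $[P.g\syn(\psi)]_{s'}=\top$. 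Applying Lemma~\ref{lem:thread} once more yields $[P.(\phi\land g\syn(\psi))]_{s'}=\top$, so the QBF $P.(\phi\land g\syn(\psi))$ is true.

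It remains to transfer truth from $\phi\land g\syn(\psi)$ back to $\phi\land\psi$. Here I would apply Theorem~\ref{thm:admissible} with the admissible function $g\syn^{-1}$ (the inverse of an admissible function is admissible) to the formula $\phi\land g\syn(\psi)$: this QBF is true iff $P.g\syn^{-1}(\phi\land g\syn(\psi))$ is true. Since admissible functions commute with connectives up to equivalence, $g\syn^{-1}(\phi\land g\syn(\psi))$ is equivalent to $g\syn^{-1}(\phi)\land\psi$; and because $G\syn$ is a \emph{syntactic} symmetry group for $\Phi$ (Def.~\ref{def:syn}) with $g\syn^{-1}\in G\syn$, the formula $g\syn^{-1}(\phi)$ is equivalent to $\phi$. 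Thus $g\syn^{-1}(\phi\land g\syn(\psi))$ is equivalent to $\phi\land\psi$, and since equivalent Boolean formulas yield QBFs of identical truth value, $P.(\phi\land\psi)$ is true.

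The main obstacle is precisely the mismatch exposed in the last step: the symmetry breaker only guarantees a strategy satisfying the \emph{transformed} formula $g\syn(\psi)$, not $\psi$ itself, and one cannot in general produce a single strategy winning for both $\phi$ and $\psi$ by direct inspection. The decisive move is to perform the transfer at the level of the whole conjunction $\phi\land g\syn(\psi)$ and then apply $g\syn^{-1}$, exploiting that $\phi$ is invariant under $G\syn$ up to equivalence so that undoing the syntactic symmetry leaves the $\phi$-part unharmed. The only auxiliary fact needed, that admissible maps commute with $\land$ up to equivalence, is a routine consequence of Definition~\ref{def:adm} already recorded immediately after it.
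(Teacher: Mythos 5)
Your proof is correct and takes essentially the same route as the paper's: the ``$\Leftarrow$'' direction via Lemma~\ref{lem:thread}, and for ``$\Rightarrow$'' extracting $g\syn,g\sem$ from the symmetry-breaker definition, using the symmetry-group properties to keep a winning strategy, combining with Lemma~\ref{lem:thread}, and undoing the syntactic part by Thm.~\ref{thm:admissible} applied with $g\syn^{-1}$. The only (harmless) difference is bookkeeping: the paper first rewrites $[P.\phi]_s$ as $[P.g\syn(\phi)]_{g\sem(s)}$ so that the conjunction is exactly $g\syn(\phi\land\psi)$ and $g\syn^{-1}$ recovers $\phi\land\psi$ directly, whereas you form $\phi\land g\syn(\psi)$ and invoke the syntactic invariance $g\syn^{-1}(\phi)\equiv\phi$ only after applying $g\syn^{-1}$.
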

\begin{proof}
  The direction ``$\Leftarrow$'' is obvious (by Lemma~\ref{lem:thread}). We show ``$\Rightarrow$''.
  Let $s\in\strat_\exists(P)$ be such that $[\Phi]_s=\top$. Since $\Phi$ is true, such an $s$ exists.
  Let $g\syn\in G\syn$ and $g\sem\in G\sem$ be such that such that $[P.g\syn(\psi)]_{g\sem(s)}=\top$.
  Since $\psi$ is an existential symmetry breaker, such elements exist.
  Since $G\syn$ and $G\sem$ are symmetry groups, $[P.g\syn(\phi)]_{g\sem(s)}=[P.\phi]_s=\top$.
  Lemma~\ref{lem:thread} implies $[P.(g\syn(\phi)\land g\syn(\psi))]_{g\sem(s)}=\top$.
  By the compatibility with logical operations (admissibility),
  \[
    [P.g\syn(\phi\land\psi)]_{g\sem(s)}=[P.(g\syn(\phi)\land g\syn(\psi))]_{g\sem(s)}=\top.
  \]
  Now by Thm.~\ref{thm:admissible} applied with $g\syn^{-1}$ to $P.g\syn(\phi\land\psi)$,
  it follows that there exists $s'$ such that $[P.(\phi\land\psi)]_{s'}=\top$, as claimed.  \qed
\end{proof}
As a corollary, we may remark that for an existential symmetry breaker $\psi$ for the prefix $P$
the formula $P.\psi$ is always true. To see this, choose $\phi=\top$ and observe that
any groups $G\syn$ and $G\sem$ are symmetry groups for~$\phi$. By the theorem, $P.(\phi\land\psi)$
is true, so $P.\psi$ is true.

\section{Universal Symmetry Breakers}

An inherent property of reasoning about QBFs is the duality between 
``existential'' and ``universal'' reasoning~\cite{DBLP:conf/sat/SabharwalAGHS06}, i.e., the duality 
between proving and refuting a QBF. For showing that a QBF is true, 
an existential strategy has to be found that is an existential winning 
strategy. An existential symmetry breaker tightens the pool of 
existential strategies among which the existential winning strategy 
can be found (in case there is one). 

If the given QBF is false, 
then a universal strategy has to be found that is a universal winning strategy.
In this case, an existential symmetry breaker is not useful.  
Recall that a universal winning strategy is a tree in which all paths are
falsifying assignments. Using an existential symmetry breaker as in Thm.~\ref{thm:mainex} tends to increase the
number of such paths and thus increases the number of potential candidates.
To aid the search for a universal winning strategy, it would be better to increase the number of paths
corresponding to satisfying assignments, because this reduces the search space for universal winning strategies.
For getting symmetry breakers serving this purpose, we can use a theory that is analogous to the theory of the
previous section.
\begin{defi}
  Let $P$ be a prefix for~$X$, let $G\syn$ be a group acting admissibly on $\BF(X)$ and let $G\sem$ be a group action on $\strat_\forall(P)$.
  A formula $\psi\in\BF(X)$ is called a \emph{universal symmetry breaker} for~$P$
  (w.r.t.\ the actions of $G\syn$ and $G\sem$)
  if for every $t\in\strat_\forall(P)$ there exist $g\syn\in G\syn$ and $g\sem\in G\sem$
  such that $[P.g\syn(\psi)]_{g\sem(t)}=\bot$.
\end{defi}
No change is needed for the definition of syntactic symmetry groups.
A semantic symmetry group for $\Phi=P.\phi$ is now a group acting on $\strat_\forall(P)$
in such a way that $[P.\phi]_t=[P.\phi]_{g(t)}$ for all $g\in G$ and all $t\in\strat_\forall(P)$.
With these adaptions, we have the following analog of Thm.~\ref{thm:mainex}.
\begin{thm}\label{thm:mainuniv}
  Let $\Phi=P.\phi$ be a QBF.
  Let $G\syn$ be a syntactic symmetry group and $G\sem$ be a semantic symmetry group acting on $\strat_\forall(P)$.
  Let $\psi$ be a universal symmetry breaker for $G\syn$ and~$G\sem$.
  Then $P.\phi$ is false iff $P.(\phi\lor\psi)$ is false.
\end{thm}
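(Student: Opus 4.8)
The plan is to transpose the proof of Theorem~\ref{thm:mainex} under the existential--universal duality: everywhere $\strat_\exists(P)$ becomes $\strat_\forall(P)$, $\land$ becomes $\lor$, $\top$ becomes $\bot$, and ``true / existential winning strategy'' becomes ``false / universal winning strategy''. The two facts that make this transposition legitimate are the disjunctive half of Lemma~\ref{lem:thread}, namely $[P.(\phi\lor\psi)]_t=[P.\phi]_t\lor[P.\psi]_t$ for $t\in\strat_\forall(P)$, and the observation that Theorem~\ref{thm:admissible} is self-dual: since every QBF is true or false by Lemma~\ref{lem:mod}, the statement ``$P.\chi$ is true iff $P.f(\chi)$ is true'' is equivalent to ``$P.\chi$ is false iff $P.f(\chi)$ is false'', so it may be invoked with either polarity.

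For the easy direction ``$\Leftarrow$'', I would take a universal winning strategy $t$ for $P.(\phi\lor\psi)$, so $[P.(\phi\lor\psi)]_t=\bot$; the disjunctive Lemma~\ref{lem:thread} then forces $[P.\phi]_t=\bot$, exhibiting $t$ as a universal winning strategy for $\Phi$, whence $P.\phi$ is false.

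For ``$\Rightarrow$'', assume $P.\phi$ is false and fix a universal winning strategy $t\in\strat_\forall(P)$, i.e.\ $[\Phi]_t=\bot$. Since $\psi$ is a universal symmetry breaker, choose $g\syn\in G\syn$ and $g\sem\in G\sem$ with $[P.g\syn(\psi)]_{g\sem(t)}=\bot$. Using that $G\syn$ makes $\phi$ and $g\syn(\phi)$ equivalent and that $G\sem$ satisfies $[P.\phi]_t=[P.\phi]_{g\sem(t)}$ (the universal form of Def.~\ref{def:semg}), I transfer the falsity along the strategy: $[P.g\syn(\phi)]_{g\sem(t)}=[P.\phi]_{g\sem(t)}=[P.\phi]_t=\bot$. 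Feeding the two $\bot$'s into the disjunctive Lemma~\ref{lem:thread} gives $[P.(g\syn(\phi)\lor g\syn(\psi))]_{g\sem(t)}=\bot$, and admissibility lets me replace $g\syn(\phi)\lor g\syn(\psi)$ by the equivalent $g\syn(\phi\lor\psi)$, so $[P.g\syn(\phi\lor\psi)]_{g\sem(t)}=\bot$. Thus $P.g\syn(\phi\lor\psi)$ is false, and applying Theorem~\ref{thm:admissible} with the admissible inverse $g\syn^{-1}$ yields a $t'$ with $[P.(\phi\lor\psi)]_{t'}=\bot$, i.e.\ $P.(\phi\lor\psi)$ is false, as claimed.

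I expect no conceptual obstacle here, only bookkeeping: the whole argument is the mirror image of the proof of Theorem~\ref{thm:mainex}, and the single point demanding attention is quoting each dualized ingredient in the right polarity --- the disjunctive rather than conjunctive clause of Lemma~\ref{lem:thread}, the universal rather than existential semantic-symmetry identity, and Theorem~\ref{thm:admissible} read in its ``false iff false'' form --- so that no quantifier or truth value is flipped the wrong way.
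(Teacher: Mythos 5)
Your proposal is correct and matches the paper exactly: the paper's own proof of Theorem~\ref{thm:mainuniv} consists precisely of the instruction to dualize the proof of Theorem~\ref{thm:mainex} (replace $\strat_\exists(P)$ by $\strat_\forall(P)$, every $\land$ by $\lor$, every $\top$ by $\bot$, and ``existential'' by ``universal''), which is the transposition you carry out in detail. Your explicit attention to the two delicate points --- invoking the disjunctive half of Lemma~\ref{lem:thread} and reading Theorem~\ref{thm:admissible} in its equivalent ``false iff false'' form, justified by Lemma~\ref{lem:mod} --- is exactly the bookkeeping the paper leaves implicit.
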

The proof is obtained from the proof of Thm.~\ref{thm:mainex} by replacing
$\strat_\exists(P)$ by $\strat_\forall(P)$, every $\land$ by~$\lor$, every
$\top$ by~$\bot$, and ``existential'' by ``universal''.

We have seen before that for an existential symmetry breaker $\psi_\exists$ the
QBF $P.\psi_\exists$ is necessarily true. Likewise, for a universal symmetry breaker $\psi_\forall$,
the QBF $P.\psi_\forall$ is necessarily false. This has the important consequence
that existential and universal symmetry breakers can be used in combination, even
if they are not defined with respect to the same group actions.
\begin{thm}
  Let $\Phi=P.\phi$ be a QBF.
  Let $G\syn^\exists$ and $G\syn^\forall$ be syntactic symmetry groups of~$\Phi$,
  let $G\sem^\exists$ be a semantic symmetry group of $\Phi$ acting on $\strat_\exists(P)$ and
  let $G\sem^\forall$ be a semantic symmetry group of $\Phi$ acting on $\strat_\forall(P)$.
  Let $\psi_\exists$ be an existential symmetry breaker for $G\syn^\exists$ and~$G\sem^\exists$, and
  let $\psi_\forall$ be an existential symmetry breaker for $G\syn^\forall$ and~$G\sem^\forall$.
  Then $P.\phi$ is true 
  iff $P.((\phi\lor\psi_\forall)\land\psi_\exists)$ is true
  iff $P.((\phi\land\psi_\exists)\lor\psi_\forall)$ is true.
\end{thm}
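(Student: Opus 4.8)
The plan is to establish both biconditionals simultaneously by anchoring the two main theorems at the original formula $P.\phi$ and bridging the remaining gap with a purely propositional monotonicity argument. The main obstacle is that one cannot naively compose Thm.~\ref{thm:mainex} and Thm.~\ref{thm:mainuniv}: applying Thm.~\ref{thm:mainex} directly to $P.(\phi\lor\psi_\forall)$ would require $G\syn^\exists$ and $G\sem^\exists$ to be symmetry groups of $\phi\lor\psi_\forall$, which they need not be, since $g(\psi_\forall)$ need not be equivalent to $\psi_\forall$ for $g\in G\syn^\exists$. I therefore keep each theorem attached to $\phi$, where it is guaranteed to apply, and reach the combined formulas by a sandwich.

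First I record a monotonicity principle: for $\alpha,\beta\in\BF(X)$, if $\alpha\rightarrow\beta$ is a tautology, then $P.\alpha$ true implies $P.\beta$ true. This follows directly from $[P.\mu]_s=\bigwedge_\sigma[\mu]_\sigma$: if $s\in\strat_\exists(P)$ is winning for $P.\alpha$, then $[\alpha]_\sigma=\top$ on every path $\sigma$ of $s$, whence $[\beta]_\sigma=\top$ on every path, so the same $s$ is winning for $P.\beta$. This is exactly the reasoning behind the ``$\Leftarrow$'' direction of Thm.~\ref{thm:mainex}.

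Next I set up the two anchors. By Thm.~\ref{thm:mainex} applied to $P.\phi$ with $G\syn^\exists$, $G\sem^\exists$ and the existential symmetry breaker $\psi_\exists$ (all symmetry groups of $\Phi$ by hypothesis), $P.\phi$ is true iff $P.(\phi\land\psi_\exists)$ is true. Dually, by Thm.~\ref{thm:mainuniv} applied to $P.\phi$ with $G\syn^\forall$, $G\sem^\forall$ and the universal symmetry breaker $\psi_\forall$, the formula $P.\phi$ is true iff $P.(\phi\lor\psi_\forall)$ is true. Writing $F_1=(\phi\lor\psi_\forall)\land\psi_\exists$ and $F_2=(\phi\land\psi_\exists)\lor\psi_\forall$, I then verify the four elementary tautologies $\phi\land\psi_\exists\rightarrow F_i$ and $F_i\rightarrow\phi\lor\psi_\forall$ for $i\in\{1,2\}$, so that each $F_i$ is sandwiched between $\phi\land\psi_\exists$ and $\phi\lor\psi_\forall$.

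Finally I close the loop for $\mu\in\{F_1,F_2\}$: if $P.\phi$ is true then $P.(\phi\land\psi_\exists)$ is true by the first anchor, hence $P.\mu$ is true by monotonicity through the left side of the sandwich; conversely if $P.\mu$ is true then $P.(\phi\lor\psi_\forall)$ is true by monotonicity through the right side, hence $P.\phi$ is true by the second anchor. This cyclic chain gives $P.\phi$ true iff $P.\mu$ true; taking $\mu=F_1$ and $\mu=F_2$ yields the asserted double equivalence. The monotonicity step and the four tautology checks are routine; the only genuine content is the realization, noted above, that the two theorems must be anchored at $\phi$ and linked propositionally, rather than composed on the combined formulas.
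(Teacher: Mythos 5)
Your proof is correct, and it actually supplies a justification that the paper's own proof glosses over. The paper proves the first equivalence as a chain: $P.\phi$ true $\iff$ $P.(\phi\lor\psi_\forall)$ true by Thm.~\ref{thm:mainuniv}, then $\iff \exists\, s\in\strat_\exists(P):[P.(\phi\lor\psi_\forall)]_s\land[P.\psi_\exists]_s=\top$, then $\iff$ $P.((\phi\lor\psi_\forall)\land\psi_\exists)$ true by Lemma~\ref{lem:thread}; the middle step carries only the annotation ``$[P.\psi_\exists]_s=\top$'', which is false for an arbitrary $s$ (only the QBF $P.\psi_\exists$ is true; not every strategy wins it), and Thm.~\ref{thm:mainex} is never cited. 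Making that step rigorous requires exactly your maneuver: anchor Thm.~\ref{thm:mainex} at $\phi$ --- the only formula for which $G\syn^\exists$ and $G\sem^\exists$ are guaranteed by the hypotheses to be symmetry groups --- to obtain a single strategy winning both $\phi$ and $\psi_\exists$, and then transport it through the tautology $\phi\land\psi_\exists\to(\phi\lor\psi_\forall)\land\psi_\exists$. Your opening observation, that Thm.~\ref{thm:mainex} cannot be applied directly to $P.(\phi\lor\psi_\forall)$ because nothing makes the existential groups symmetry groups of that formula, is precisely the subtlety the paper's telegraphic step hides. So both arguments rest on the same ingredients (the two main theorems plus Lemma~\ref{lem:thread}), but your sandwich chain $P.\phi\Rightarrow P.(\phi\land\psi_\exists)\Rightarrow P.F_i\Rightarrow P.(\phi\lor\psi_\forall)\Rightarrow P.\phi$ treats both combined formulas $F_1,F_2$ uniformly and is fully rigorous, at the cost of a few extra (routine) tautology checks; the paper's version buys brevity and relegates the second equivalence to ``analogous''. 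One side remark: like the paper's proof, you silently read the hypothesis on $\psi_\forall$ as ``universal symmetry breaker''; the word ``existential'' there is evidently a typo in the statement, since $G\sem^\forall$ acts on $\strat_\forall(P)$.
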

\begin{proof}
  \def\equiv#1{\stackrel{\hbox to 4em{\hss\scriptsize\strut #1\hss}}\iff}%
  For the first equivalence, we have
  \begin{alignat*}1
  \text{$P.\phi$ is true}
  &\equiv{Thm.~\ref{thm:mainuniv}} \text{$P.(\phi\lor\psi_\forall)$ is true}\\
  &\equiv{Def.} \exists\ s\in\strat_\exists(P): [P.(\phi\lor\psi_\forall)]_s=\top\\
  &\equiv{} \exists\ s\in\strat_\exists(P): [P.(\phi\lor\psi_\forall)]_s\land\underbrace{[P.\psi_\exists]_s}_{=\top}=\top\\
  &\equiv{Lem.~\ref{lem:thread}} \exists\ s\in\strat_\exists(P): [P.((\phi\lor\psi_\forall)\land \psi_\exists)]_s=\top\\
  &\equiv{Def.} \text{$P.((\phi\lor\psi_\forall)\land\psi_\exists)$ is true}.
  \end{alignat*}
  The proof of the second equivalence is analogous.  \qed
\end{proof}
Next we relate existential symmetry breakers to universal symmetry breakers.
Observe that when $P$ is a prefix and $\tilde P$ is the prefix obtained from $P$ by changing
all quantifiers, i.e., replacing each $\exists$ by $\forall$ and each $\forall$ by~$\exists$,
then $\strat_\exists(P)=\strat_\forall(\tilde P)$.
For any formula $\phi\in\BF(X)$ and any $s\in\strat_\exists(P)=\strat_\forall(\tilde P)$
we have $\neg[P.\phi]_s=[\tilde P.\neg\phi]_s$.
Therefore, if $G\syn$ is a group acting admissibly on $\BF(X)$ and $G\sem$ is a group acting
on $\strat_\exists(P)=\strat_\forall(\tilde P)$, we have
\begin{alignat*}1
  &\text{$\psi$ is an existential symmetry breaker for $G\syn$ and $G\sem$}\\
  &\iff \forall\ s\in\strat_\exists(P)\ \exists\ g\syn\in G\syn,g\sem\in G\sem: [P.g\syn(\psi)]_{g\sem(s)}=\top\\
  &\iff \forall\ s\in\strat_\forall(\tilde P)\ \exists\ g\syn\in G\syn,g\sem\in G\sem: [\tilde P.\neg g\syn(\psi)]_{g\sem(s)}=\bot\\
  &\iff \forall\ s\in\strat_\forall(\tilde P)\ \exists\ g\syn\in G\syn,g\sem\in G\sem: [\tilde P.g\syn(\neg\psi)]_{g\sem(s)}=\bot\\
  &\iff \text{$\neg\psi$ is universal symmetry breaker for $G\syn$ and $G\sem$},
\end{alignat*}
where admissibility of $g\syn$ is used in the third step. We have thus proven the following theorem which generalizes Property~2 of the symmetry breaker 
introduced in~\cite{audemard2007efficient}.
\begin{thm}\label{thm:duality}
  Let $P$ be a prefix for~$X$ and let $\tilde P$ be the prefix obtained from $P$ by flipping all the quantifiers.
  Let $G\syn$ be a group acting admissibly on~$\BF(X)$ and
  let $G\sem$ be a group acting on $\strat_\exists(P)=\strat_\forall(\tilde P)$.
  Then $\psi\in\BF(X)$ is an existential symmetry breaker for $G\syn$ and~$G\sem$ if and only if
  $\neg\psi$ is a universal symmetry breaker for $G\syn$ and~$G\sem$.
\end{thm}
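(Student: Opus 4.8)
The plan is to prove Theorem~\ref{thm:duality} by a direct chain of logical equivalences, translating the defining condition of an existential symmetry breaker for~$P$ into the defining condition of a universal symmetry breaker for~$\tilde P$, using the quantifier-flipping correspondence between $\strat_\exists(P)$ and $\strat_\forall(\tilde P)$. The key structural fact I would establish first is that for any $\phi\in\BF(X)$ and any $s\in\strat_\exists(P)=\strat_\forall(\tilde P)$ we have $\neg[P.\phi]_s=[\tilde P.\neg\phi]_s$. This follows because a path $\sigma$ of $s$ contributes to $[P.\phi]_s=\bigwedge_\sigma[\phi]_\sigma$ as an existential strategy, whereas the very same tree, read as a universal strategy for $\tilde P$, computes $[\tilde P.\neg\phi]_s=\bigvee_\sigma[\neg\phi]_\sigma=\bigvee_\sigma\neg[\phi]_\sigma$, and de~Morgan turns the conjunction into the negated disjunction.

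Next I would write out the definition of ``$\psi$ is an existential symmetry breaker for $G\syn$ and $G\sem$'' as
\[
  \forall\ s\in\strat_\exists(P)\ \exists\ g\syn\in G\syn,\,g\sem\in G\sem:\ [P.g\syn(\psi)]_{g\sem(s)}=\top,
\]
and then rewrite each piece. Using $\strat_\exists(P)=\strat_\forall(\tilde P)$, the outer quantifier over $s$ ranges over the same set of trees. Applying the structural identity to the formula $g\syn(\psi)$ at the strategy $g\sem(s)$ converts $[P.g\syn(\psi)]_{g\sem(s)}=\top$ into $[\tilde P.\neg g\syn(\psi)]_{g\sem(s)}=\bot$. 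The last step is to move the negation inside: since $g\syn$ is admissible, condition~1 of Definition~\ref{def:adm} guarantees that $\neg g\syn(\psi)$ and $g\syn(\neg\psi)$ are equivalent, and equivalent formulas have the same truth value under every assignment (hence the same $[\tilde P.\,\cdot\,]_{g\sem(s)}$ value). This yields $[\tilde P.g\syn(\neg\psi)]_{g\sem(s)}=\bot$, which is exactly the defining condition that $\neg\psi$ is a universal symmetry breaker for $G\syn$ and $G\sem$ with respect to the prefix~$\tilde P$.

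The one place requiring care — and the main obstacle — is the interchange of negation with the syntactic symmetry in the third step. Admissibility only asserts $[\neg\phi]_{f(\sigma)}=[f(\neg\phi)]_\sigma$, so I must argue explicitly that $f(\neg\psi)$ is \emph{equivalent} to $\neg f(\psi)$ (as the remark following Definition~\ref{def:adm} records) rather than literally equal, and then invoke that the value $[\tilde P.\,\cdot\,]_{g\sem(s)}$ depends only on the equivalence class of the formula. I would also note in passing that every quantifier over $g\syn,g\sem$ and the biconditional nature of each rewriting step are preserved, so the implication runs in both directions and no separate converse argument is needed; the whole proof is a single ``$\iff$'' chain. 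Since the statement is genuinely symmetric in $\psi\leftrightarrow\neg\psi$ and $P\leftrightarrow\tilde P$ (applying the result to $\neg\psi$ recovers the reverse reading), I expect the forward chain alone to suffice.
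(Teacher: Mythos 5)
Your proposal is correct and follows essentially the same route as the paper: the identical chain of biconditionals built on the identity $\neg[P.\phi]_s=[\tilde P.\neg\phi]_s$ for $s\in\strat_\exists(P)=\strat_\forall(\tilde P)$, with admissibility of $g\syn$ invoked exactly where the paper invokes it, to pass from $\neg g\syn(\psi)$ to $g\syn(\neg\psi)$. Your added care in noting that $g\syn(\neg\psi)$ is only \emph{equivalent} to $\neg g\syn(\psi)$ (and that the semantic value depends only on the equivalence class) makes explicit a point the paper leaves implicit, but it is the same argument.
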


\section{Construction of Symmetry Breakers}\label{sec:symqbf}

Because of Thm.~\ref{thm:duality}, it suffices to discuss the construction of existential symmetry breakers.
The universal symmetry breaker is obtained in a dual manner. 
Given a symmetry group, the basic idea 
 is similar as for SAT~(see also the French thesis of Jabbour~\cite{jabbour2008satisfiabilite} for a detailed discussion on lifting SAT symmetry breaking 
techniques to QBF).
First an order on $\strat_\exists(P)$ is imposed
such that every orbit contains an element which is minimal with respect to the order.
Then we construct a formula $\psi_\exists$
for which (at least) the minimal elements of the
orbits are winning strategies. Any such formula is an existential
symmetry breaker. One way of constructing an
existential symmetry breaker is given in the following theorem,
which generalizes the symmetry
breaking technique by Crawford et al.~\cite{DBLP:conf/kr/CrawfordGLR96}.
We give a formal proof that we obtain indeed a QBF symmetry breaker and 
conclude with lifting a recent CNF encoding to QBF.
\begin{thm}\label{thm:main}
  Let $P = Q_1x_1\ldots Q_nx_n$ be a prefix for~$X$, let $G\syn$ be a group acting admissibly on~$\BF(X)$,
  and let $G\sem$ be the associated group of~$G\syn$.
  Then
  \[
   \psi=\bigwedge_{\vbox{\clap{\scriptsize\strut$i=1$}\kern-3pt\clap{\scriptsize\strut$Q_i=\exists$}}}^n\quad\bigwedge_{g\in G\syn} \biggl(\Bigl(\bigwedge_{j<i} (x_j\leftrightarrow g(x_j))\Bigr)\to\Bigl(x_i\to g(x_i)\Bigr)\biggr)
  \]
  is an existential symmetry breaker for $G\syn$ and~$G\sem$.
\end{thm}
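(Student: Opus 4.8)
The plan is to verify the defining property of an existential symmetry breaker directly: I fix an arbitrary $s\in\strat_\exists(P)$ and exhibit $g\syn$ and $g\sem$ with $[P.g\syn(\psi)]_{g\sem(s)}=\top$. Throughout I would take $g\syn=\id$, so that the task reduces to finding, inside the orbit $O=G\sem\cdot s$, a single strategy all of whose paths satisfy $\psi$; call such a strategy \emph{$\psi$-good}. The first observation is that $\psi$ is a \emph{per-path} condition: reading off its definition, a path $\rho$ satisfies $\psi$ precisely when, for every existential index $i$ and every $g\in G\syn$ such that $\rho$ and the assignment $x\mapsto[g(x)]_\rho$ agree on $x_1,\dots,x_{i-1}$, one has $[x_i]_\rho\le[g(x_i)]_\rho$ (with $\bot<\top$). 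Equivalently, $\rho$ is ``lex-leader-like'': for every $g$, the first coordinate at which $\rho$ and $g(\rho)$ differ is either universal or an existential coordinate where $\rho$ carries the smaller value. Thus $s$ is $\psi$-good exactly when each of its paths is minimal in this guarded lexicographic sense.

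Next I would fix a total order $\preceq$ on the finite set $\strat_\exists(P)$ that refines the tree structure of the prefix — for instance, compare two strategies by the tuple of existential labels read off in a fixed, prefix-compatible traversal, ordered lexicographically with $\bot<\top$ — and let $s^\ast$ be a $\preceq$-minimal element of $O$. Since $O$ is finite, $s^\ast$ exists and lies in $G\sem\cdot s$, so it suffices to show that $s^\ast$ is $\psi$-good. Two extreme cases confirm the design of $\psi$. If $P$ is purely existential (the SAT case), every strategy is a single path and $O=\{g(\rho):g\in G\syn\}$, so the $\preceq$-minimal element is the ordinary lexicographic orbit leader and satisfies $\psi$ by construction. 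If $P$ is purely universal there are no existential indices, $\psi$ is the empty conjunction $\top$, and every strategy is trivially $\psi$-good.

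The heart of the proof is that minimality forces $s^\ast$ to be $\psi$-good. Assume not, and pick a path $\rho^\ast$ of $s^\ast$, an existential index $i$, and an element $g\in G\syn$ witnessing a violation, with $i$ as small as possible. Since $\rho^\ast$ and $g(\rho^\ast)$ agree on $x_1,\dots,x_{i-1}$ while $[x_i]_{\rho^\ast}=\top$ and $[g(x_i)]_{\rho^\ast}=\bot$, the assignment $g(\rho^\ast)$ is strictly smaller than $\rho^\ast$ at an existential coordinate. By Lemma~\ref{lem:p}, the element of $G\sem$ associated to $g^{-1}$ produces a strategy $s'\in O$ having $g(\rho^\ast)$ as a path, so the improved existential choices are available \emph{somewhere} in the orbit. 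It then remains to import only this local improvement into $s^\ast$: via Lemma~\ref{lem:10} I would replace the subtree of $s^\ast$ rooted at the $i$th node of $\rho^\ast$ by the corresponding subtree of $s'$, obtaining $s''\in O$ that agrees with $s^\ast$ away from that node but is strictly $\preceq$-smaller on the paths through it, contradicting minimality of $s^\ast$.

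The step I expect to be the main obstacle is the application of Lemma~\ref{lem:10}, whose hypothesis demands that \emph{every} $g'\in G\syn$ — not only the chosen witness $g$ — fixes $x_1,\dots,x_{i-1}$ along $\rho^\ast$; this need not hold at the first existential disagreement. The resolution is to couple the choice of violation with the first coordinate of $\rho^\ast$ that any symmetry moves, arguing from the minimality of $i$ (and, if necessary, composing several subtree swaps) that the earliest improvable existential coordinate sits atop a symmetry-stable prefix, so that Lemma~\ref{lem:10} genuinely applies while the modification stays localized enough to decrease $\preceq$. Reconciling the guarded-lexicographic violation with the symmetry-stability condition of Lemma~\ref{lem:10} is the delicate core of the argument, and the universal coordinates — which $\psi$ deliberately leaves unconstrained — are exactly what make this bookkeeping nontrivial.
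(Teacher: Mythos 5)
Your proposal follows the same skeleton as the paper's proof: take $g\syn=\id$, order $\strat_\exists(P)$ lexicographically by edge labels read root-first, pick a minimal element $s^\ast$ of the orbit under $G\sem$, and turn a violation of $\psi$ into a strictly smaller orbit element via Lemma~\ref{lem:p} (to realize the violating $g$ semantically) and Lemma~\ref{lem:10} (to localize the change to the subtree below the violation). You also correctly identify the crux: Lemma~\ref{lem:10} requires that \emph{every} element of the syntactic group fixes $x_1,\dots,x_{i-1}$ along the path, while the violation supplies only one witness $g$ with that property. However, your proposed resolution of this crux is based on a false claim. Choosing the violation with $i$ minimal does \emph{not} make the first $i-1$ coordinates of $\rho^\ast$ stable under all of $G\syn$: an element $g'\in G\syn$ can move an earlier coordinate without creating any earlier violation of $\psi$, either because that coordinate is universal (then $\psi$ has no conjunct for it, and the guard $\bigwedge_{j<i'}(x_j\leftrightarrow g'(x_j))$ of all later conjuncts for $g'$ simply fails), or because it is existential with $[x_j]_{\rho^\ast}=\bot$ and $[g'(x_j)]_{\rho^\ast}=\top$, which satisfies $x_j\to g'(x_j)$. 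So $i$-minimality gives no control over the whole group, and it is not apparent how ``composing several subtree swaps'' could restore that control; the step you yourself call the delicate core is left open.

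The paper closes this gap with a purely group-theoretic move that you are missing: it applies Lemma~\ref{lem:10} not to $G\syn$ but to the \emph{subgroup} $H=\{h\in G\syn : [h(x_j)]_\sigma=[x_j]_\sigma \text{ for all } j<i\}$, exactly the elements that do stabilize the prefix of the violating path. The witness $g$ lies in $H$, so its translation $f$ from Lemma~\ref{lem:p} lies in the associated group of $H$, and Lemma~\ref{lem:10} applied to $H$ produces the swap element $h$ mapping $\tilde s$ to the modified strategy $s''$, with $h$ in the associated group of $H$. Since Definition~\ref{def:1} is monotone in the syntactic group (a bijection whose paths are matched by elements of $H$ is in particular matched by elements of $G\syn\supseteq H$), the associated group of $H$ is contained in $G\sem$, so $h\in G\sem$ and $s''$ lies in the same orbit as $\tilde s$, contradicting minimality. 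Note that with this fix no minimality of $i$ is needed at all: any violation whatsoever yields the contradiction. The missing idea is thus to change the group to which Lemma~\ref{lem:10} is applied, rather than to try to force all of $G\syn$ to stabilize the prefix.
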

\begin{proof}
  All elements of $\strat_\exists(P)$ are trees with the same shape.
  Fix a numbering of the edge positions in these trees which is such that whenever
  two edges are connected by a path, the edge closer to the root has the smaller index.
  (One possibility is breadth first search order.)
  For any two distinct strategies $s_1,s_2\in\strat_\exists(P)$, there is then a minimal
  $k$ such that the labels of the $k$th edges of $s_1,s_2$ differ.
  Define $s_1<s_2$ if the label is $\bot$ for $s_1$ and $\top$ for $s_2$, and
  $s_1>s_2$ otherwise.

  Let $s\in\strat_\exists(P)$. We need to show that there are $g\syn\in G\syn$ and
  $g\sem\in G\sem$ such that $[g\syn(\psi)]_{g\sem(s)}=\top$.
  Let $g\syn=\id$ and let $g\sem$ be such that $\tilde s:=g\sem(s)$ is as small as possible
  in the order defined above. We show that $[\psi]_{\tilde s}=\top$.
  Assume otherwise. Then there exists $i\in\{1,\dots,n\}$ with $Q_i=\exists$ and $g\in G\syn$
  and a path $\sigma$ in $\tilde s$ with $[x_j]_\sigma=[g(x_j)]_\sigma$ for all $j<i$ and
  $[x_i]_\sigma=\top$ and $[g(x_i)]_\sigma=\bot$.
  By Lemma~\ref{lem:p}, the element $g\in G\syn$ can be translated into an element $f\in G\sem$ which maps
  $\tilde s$ to a strategy $f(\tilde s)$ which contains a path that agrees with $\sigma$
  on the upper $i-1$ edges but not on the $i$th.
  By Lemma~\ref{lem:10}, applied to the subgroup $H\subseteq G\syn$ consisting of
  all $h\in G\syn$ with $[x_j]_\sigma=[h(x_j)]_\sigma$ for all $j<i$, we may assume that
  $f(\tilde s)$ and $\tilde s$ only differ in edges that belong to the subtree rooted
  at the $i$th node of~$\sigma$. As all these edges have higher indices, we have
  $\tilde s < s$, in contradiction to the minimality assumption on~$s$.
  \qed
\end{proof}
Note that we do not need to know the group $G\sem$ explicitly.
It is only used implicitly in the proof.
In nontrivial applications, $G\syn$ will have a lot of elements. It is not necessary (and
not advisable) to use them all, although Thm.~\ref{thm:main} would allow us to do so. In
general, if a formula $\psi_1\land\psi_2$ is an existential symmetry breaker, then so are
$\psi_1$ and $\psi_2$, so we are free to use only parts of the large conjunctions. A
reasonable choice is to pick a set $E$ of generators for $G\syn$ and let the inner
conjunction run over (some of) the elements of $E$.

The formula~$\psi$ of Thm.~\ref{thm:main} can be efficiently encoded as conjunctive
normal form~(CNF), adopting the propositional encoding of \cite{DBLP:conf/sat/Devriendt0BD16,DBLP:series/faia/Sakallah09}:
let $g\in G\syn$ and let $\{y_0^g,\ldots,y_{n-1}^g\}$ be a set of fresh variables. First, we define a
set $I^g$ of clauses that represent all implications $x_i \rightarrow g(x_i)$
of $\psi$ from Thm.~\ref{thm:mainex},
\[
 I^g = \{(\neg y_{i-1}^g \lor \neg x_i \lor g(x_i)) \mid 1\leq i \leq n, Q_i = \exists\}.
\]
When $x_i$ is existentially quantified, by using Tseitin variables 
$y_{i-1}^g$ we can recycle the
implications $x_i\to g(x_i)$ in the encoding
of the equivalences $x_j \leftrightarrow g(x_j)$ that appear in the outer implication:
\[
E^g = \{(y_j^g \lor \neg y_{j-1}^g \lor \neg x_j)\land (y_j \lor \neg y_{j-1} \lor g(x_j)) \mid 1\leq j < n,Q_j = \exists\}.
\]
If variable $x_j$ is universally quantified, the recycling is not possible,
so we use
\[U^g = \{(y_j^g \lor \neg y_{j-1}^g \lor \neg x_j \lor \neg g(x_j))\land (y_j^g \lor  \neg y_{j-1}^g \lor x_j \lor g(x_j))  \mid 1\leq j < n,Q_j = \forall\}
\]
instead. The CNF encoding of $\psi$ is then the conjunction of $y_0^g$ and all the clauses
in $I^g$, $E^g$, and $U^g$, for all desired $g\in G\syn$.
The prefix $P$ has to be extended by additional quantifiers which bind the Tseitin variables~$y_i^g$.
As explained in~\cite{DBLP:conf/sat/EglySTWZ03},
the position of such a new variable in the prefix has to be behind the 
quantifiers of the variables occuring in its definition. 
The encoding of universal symmetry breakers works similarly and results in
a formula in disjunctive normal form (DNF), i.e., a
disjunction of cubes (conjunctions of literals). 
In this case the auxiliary variables are universally quantified.
The obtained cubes could be
used by solvers that simulatanously reason on the CNF and DNF
representation of a formula (e.g.,~\cite{DBLP:conf/date/GoultiaevaSB13,DBLP:journals/ai/JanotaKMC16}) or by solvers that
operate on formulas of arbitrary structure (e.g.,~\cite{DBLP:journals/corr/abs-1710-02198,DBLP:conf/sat/Tentrup16,DBLP:journals/ai/JanotaKMC16}).
The practical evaluation of this approach is a separate topic which we leave to future work.

Besides the practical evaluation of the discussed 
symmetry breakers in 
connection with recent QBF solving technologies there are many
more promising directions for future work.
Also different orderings than the lexicographic order applied
in Thm.~\ref{thm:main} could be used~\cite{DBLP:conf/cp/NarodytskaW13}
for the construction of novel symmetry breakers. 
Recent improvements of static symmetry 
breaking~\cite{DBLP:conf/sat/Devriendt0BD16} for 
SAT could be lifted to QBF and applied in combination with 
recent preprocessing techniques. 
Also dynamic symmetry 
breaking during the solving could be beneficial, for example 
in the form of symmetric explanation learning~\cite{DBLP:conf/sat/Devriendt0B17}.

An other interesting direction would be the relaxation of the 
quantifier ordering. Our symmetry framework 
assumes a fixed quantifier prefix with a strict ordering. In recent
works it has been shown that relaxing this order by the means 
of dependency schemes is beneficial for QBF solving both in theory and in 
practice~\cite{DBLP:conf/sat/BlinkhornB17,DBLP:conf/sat/PeitlSS17}. 
In a similar way as proof systems have been parameterized 
with dependency schemes, our symmetry framework 
can also be parameterized with dependency schemes. 
It can be expected that a more relaxed
notion of quantifier dependencies induces more symmetries
resulting in more powerful symmetry breakers.

\bibliographystyle{splncs}
\bibliography{refs}

\end{document}